\newcommand{\tightcite}[1]{\kern-0.15em\cite{#1}}
\def\computationproblem#1#2#3{
  \begin{center}
  \begin{tabular}{rp{0.8\textwidth}}
  {\sc Problem:\enspace}&#1\\
  {\sc Input:\enspace}&#2\\
  {\sc Question:\enspace}&#3\\
  \end{tabular}
  \end{center}
}
\def\na#1{^{\uparrow #1}}
\newtheorem*{conjecture2}{Conjecture}
\title{Computational complexity of covering regular trees} 
\titlerunning{Computational complexity of covering regular trees}
\author{Jan Bok}{Department of Algebra, Faculty of Mathematics and Physics, Charles University, Prague, Czech Republic}{jan.bok@matfyz.cuni.cz}{https://orcid.org/0000-0002-7973-1361}{Supported by the European Union (ERC, POCOCOP, 101071674). Views and opinions expressed are however those of the author(s) only and do
not necessarily reflect those of the European Union or the European Research Council Executive Agency. Neither the European
Union nor the granting authority can be held responsible for them.}
\author{Jiří Fiala}{Department of Applied Mathematics, Faculty of Mathematics and Physics, Charles University, Prague, Czech Republic}{fiala@kam.mff.cuni.cz}{https://orcid.org/0000-0002-8108-567X}{Supported by research grant GAČR 20-15576S of the Czech Science Foundation.} 
\author{Nikola Jedličková}{Department of Applied Mathematics, Faculty of Mathematics and Physics, Charles University, Prague, Czech Republic \and Department of Algebra, Faculty of Mathematics and Physics, Charles University, Prague, Czech Republic}
{jedlickova@kam.mff.cuni.cz}{https://orcid.org/0000-0001-9518-6386}{Supported by the research grants PRIMUS/24/SCI/008, UNCE/24/SCI/022 of Charles University, and SVV–2025–260822.}
\author{Jan Kratochvíl}{Department of Applied Mathematics, Faculty of Mathematics and Physics, Charles University, Prague, Czech Republic}{honza@kam.mff.cuni.cz}{https://orcid.org/0000-0002-2620-6133}{Supported by research grant GAČR 20-15576S of the Czech Science Foundation.}
\authorrunning{J. Bok, J. Fiala, N. Jedličková, J. Kratochvíl}
\keywords{graph cover, covering projection, semi-edges, multigraphs, complexity, constrained homomorphisms, trees}
\begin{document}

\maketitle

\begin{abstract}
A graph covering projection, also referred to as a locally bijective homomorphism, is a mapping between the vertices and edges of two graphs that preserves incidences and is a local bijection. This concept originates in topological graph theory but has also found applications in combinatorics and theoretical computer science. In this paper we consider undirected graphs in the most general setting -- graphs may contain multiple edges, loops, and semi-edges. This is in line with recent trends in topological graph theory and mathematical physics. 

We advance the study of the computational complexity of the {\sc $H$-Cover} problem, which asks whether an input graph allows a covering projection onto a parameter graph $H$. The quest for a complete characterization started in 1990's. Several results for simple graphs or graphs without semi-edges have been known, the role of semi-edges in the complexity setting has started to be investigated only recently. One of the most general known NP-hardness results states that {\sc $H$}-Cover is NP-complete for every simple connected regular graph of valency greater than two. We complement this result by considering regular graphs $H$ arising from connected acyclic graphs by adding semi-edges.
Namely, we prove that any graph obtained by adding semi-edges to the vertices of a tree  making it a $d$-regular graph with $d \geq 3$, defines an NP-complete graph covering problem. In line with the so called Strong Dichotomy Conjecture, we prove that the NP-hardness holds even for simple graphs on input.
\end{abstract}

\newpage
\section{Introduction}\label{sec:Intro}

The study of graph homomorphisms has a long-standing tradition in combinatorics and theoretical computer science. During the past few decades, numerous variations of the problem have been considered.
Informally, graph homomorphism is an adjacency-preserving mapping between two graphs. 
A significant portion of research in this area focuses on so-called \emph{constrained homomorphisms}, where additional local or global conditions are imposed on the
mapping, such as bijectivity, surjectivity, or injectivity.
Among those, the most widely studied are locally bijective homomorphisms, also called \emph{graph coverings} (or \emph{graph lifts}). The notion comes from topology, where a covering (also referred to as a covering projection) is a special type of local homeomorphism between topological spaces, and the term of graph covering is basically a discretization of this important and deeply studied concept.

The notion of graph covering quickly proved useful in discrete mathematics, namely as a tool to construct highly symmetric graphs~\cite{n:Djokovic74,n:Gardiner74,n:Biggs74,n:Biggs81,n:Biggs82,n:Biggs84} or for embedding complete graphs in surfaces of higher genus~\cite{k:Ringel74}. In theoretical computer science, the notion surprisingly found applications in the theory of so-called local computations~\cite{n:Angluin80,n:Chalopin05,n:ChMZ06,n:ChalopinP11,n:CM94,n:LMZ93}. In~\cite{packing_bipartite}, the authors study a close relation of packing bipartite graphs to a special variant of graph coverings called \emph{pseudo-coverings}.

Despite the efforts and attention that graph covers have received in the computer science community, their computational complexity remains far from being fully understood.
Bodlaender~\cite{n:Bodlaender89} proved that deciding if one graph covers another is an NP-complete problem if both graphs are part of the input. The question is obviously at least as difficult as the famous graph isomorphism problem: consider two given graphs on the same number of vertices. Abello, Fellows, and Stillwell~\cite{n:AFS91} later considered the variant of having the target graph, say $H$, fixed and studied the following parameterized problem.

\computationproblem{{\sc $H$-Cover}}{A graph $G$.}{Does $G$ cover $H$?}

They were the first to raise the question of a complete characterization of the computational complexity of this problem. This initiated a substantial body of research, primarily focusing on graphs without semi-edges (while already in the seminal paper \cite{n:AFS91} graphs with multiple edges and loops have been considered). This includes proving the polynomial-time solvability of {\sc $H$-Cover} for connected simple undirected graphs $H$ that have at most two vertices in every equivalence class of their degree partitions~\cite{n:KPT94} (a graph is simple if it contains no loops, no semi-edges and no multiple edges). Furthermore, the complexity of covering $2$-vertex multigraphs was fully characterized in~\cite{n:KPT97a}, the characterization for 3-vertex undirected multigraphs can be found in~\cite{n:KTT16}. The most general NP-hardness result known so far is the hardness of covering simple regular graphs of valency at least three~\cite{n:Fiala00b,n:KPT97}. More recently, B\'{\i}lka et al.~\cite{n:BilkaJKTV11} proved that covering several concrete small graphs (including the complete graphs $K_4, K_5$ and $K_6$) remains NP-hard for planar inputs. This shows that planarity does not help in graph covering problems in general, yet the conjecture that the \textsc{$H$-Cover} problem restricted to planar inputs is at least as difficult as for general inputs, provided $H$ itself has a finite planar cover, remains still open. Planar graphs have also been considered by Fiala et al.~\cite{n:FialaKKN14} who showed that for planar graphs $H$, the so called \textsc{$H$-RegularCover} is in FPT when parameterized by $H$. Let us point out that in all the above results it was assumed that $H$ has no multiple edges, no loops and no semi-edges.

Another connection to algorithmic theory comes through the notions of the {\em degree partition} and the {\em degree refinement matrix} of a graph. These notions were introduced by Corneil~\cite{Corneil68,n:CorneilG70} in hope of solving the graph isomorphism problem efficiently. This motivated Angluin and Gardiner~\cite{n:AG81} to prove that any two finite regular graphs of the same valency have a finite common cover, and conjecture the same for every two finite graphs with the same degree refinement matrix, which was later proved by Leighton~\cite{n:Leighton82}.

The stress on finiteness of the common cover is natural. For every matrix, there exists a universal cover, an infinite tree, that covers all graphs with this degree refinement matrix.  Trees are planar graphs, and this inspired an at first sight innocent question of which graphs allow a finite planar cover.  Negami observed that projective planar graphs do (in fact, their double planar covers characterize their projective embedding), and conjectured that these two classes actually coincide~\cite{n:Negami88}. Despite a serious effort of numerous authors, the problem is still open, although the scope for possible failure of Negami's conjecture has been significantly reduced \cite{n:Archdeacon02,n:Hlineny98,n:HT04}.

As mentioned earlier, the requirement for local bijectivity can be relaxed. In these relaxations, homomorphisms can be either locally injective or locally surjective and not necessarily locally bijective.  The computational complexity of locally surjective homomorphisms has been classified completely, with respect to the fixed target graph~\cite{n:FP05}. Though the complete classification of the complexity of locally injective homomorphisms is still out of sight, it has been proved for its list variant~\cite{n:FK06}.  The problem is also interesting for its applied motivation --- the locally surjective version has played an important role in the social sciences~\cite{n:FP05}, particularly in the \emph{Role Assignment Problem}. A prominent special case of the locally injective homomorphism problem is the well-studied $L(2,1)$-labeling problem~\cite{n:GriggsY92}. Further generalizations include the notion of $H(p,q)$-coloring, a homomorphism into a fixed target graph $H$ with additional rules on the neighborhoods of the vertices~\cite{n:FHKT03,n:KT00}. It is worth noting that for every fixed graph $H$, the existence of a locally injective homomorphism to $H$ is provably at least as hard as the $H$-cover problem itself. To find more about locally injective homomorphisms, see e.g.~\cite{n:ChaplickFHPT15,n:FK08,n:MacGillivrayS10}.
We also refer the reader to the survey concerning various aspects of locally constrained homomorphisms~\cite{n:FK08}.

So far, we have not been speaking about \emph{semi-edges}, which are informally really just edges with exactly one endpoint (similar to loops, which have a single unique endpoint but are incident twice to that endpoint). Semi-edges appear naturally in group-theoretical and topological constructions where covers are mostly used and employed. Graphs with semi-edges arise as quotients of standard graphs by groups of automorphisms which are semiregular on vertices and darts (arcs) but may fix edges. From the graph-theoretical point of view, semi-edges allow an elegant and unifying description of several crucial graph-theoretical notions. For instance, it was observed before that a simple $k$-regular graph is $k$-edge-colorable if and only if it covers the one-vertex graph with $k$ semi-edges. Similarly, a simple cubic graph contains a perfect matching if and only if it covers the one-vertex graph with one loop and one semi-edge.  Last but not least, in modern topological graph theory graphs with semi-edges are widely used and considered since these occur naturally in algebraic graph reductions. Let us name just a few other significant examples of usage of semi-edges. Malni{\v{c}} et al.~\cite{n:MalnivcMP04} considered
semi-edges during their study of abelian covers to allow for a broader range of applications. 
Furthermore, the concept of graphs with semi-edges was introduced independently and
naturally in mathematical physics~\cite{getzler1998modular}.
It is also natural to consider semi-edges in the mentioned framework of local computations (we refer to the introductory section of~\cite{n:BFHJK21-MFCS} for more details). Finally, a theorem of Leighton~\cite{n:Leighton82} on finite common covers has been recently generalized to the semi-edge setting in~\cite{arxiv1908.00830,woodhouse_2021}. To highlight a few other contributions, the reader is invited
to consult~\cite{n:MalnicNS00,n:NedelaS96}, the surveys~\cite{kwak2007graphs,nedela_mednykh}, and
finally for more recent results, the series of papers~\cite{n:FialaKKN14,n:FialaKKN18,arxiv1609.03013} and the introductions therein. 

Surprisingly, the study of the computational complexity of covering graphs with semi-edges is quite young and recent and it was initiated by Bok et al.\ in~\cite{n:BFHJK21-MFCS} in 2021 where the complete complexity dichotomy was established for one- and two-vertex target graphs.
Continuing in this direction, Bok et al.~\cite{n:BFJKS21-FCT} studied the computational complexity of covering disconnected graphs and showed that, under an adequate yet natural definition of covers of disconnected graphs, the existence of a covering projection onto a disconnected target graph is polynomial-time decidable if an only if it is the case for every connected component of the target graph.
In \cite{n:BFJKR24}, the authors consider the list version of the problem, called \textsc{List-$H$-Cover}, where the vertices and edges of the input graph come with lists of admissible targets. It is proved that the \textsc{List-$H$-Cover} problem is NP-complete for every regular graph $H$ of valency greater than 2 which contains at least one semi-simple vertex (i.e., a vertex which is incident with no loops, with no multiple edges and with at most one semi-edge).
Using this result they show the NP-co/polytime dichotomy for the computational complexity of \textsc{ List-$H$-Cover} for cubic graphs $H$.  
Most recently, Bok et al.~\cite{n:BFJKS23-WG} presented a complete characterization of the computational complexity of covering colored mixed (multi)graphs with semi-edges whose every class of degree partition contains at most two vertices. That provides a common generalization of the observation for simple graphs from~\cite{n:KPT97a}, and for the case of regular 2-vertex graphs from~\cite{n:BFHJK21-MFCS}. 

An interesting fact is that all the known NP-hard instances of {\sc $H$-Cover} remain NP-hard for simple input graphs. This led the authors of~\cite{n:BFJKR22-IWOCA} to formulate the following conjecture.

\begin{conjecture2}[Strong Dichotomy Conjecture for Graph Covers~\cite{n:BFJKR22-IWOCA}]
For every graph $H$, the problem {\sc $H$-Cover} is either polynomial-time solvable for arbitrary input graphs, or it is NP-complete for simple graphs as input.
\end{conjecture2}

In the following subsection, we present our main results, which align well with the Strong Dichotomy Conjecture.

\subsection*{Our results and motivation}\label{sub:results}

Suppose that $T$ is a (simple) tree of maximum degree $\Delta$. For $d\ge \Delta$, denote by $T\na{d}$ the $d$-regular graph obtained from $T$ by adding semi-edges (which means that a vertex of degree $k$ in $T$ gets $d-k$ semi-edges added to it).  With a slight abuse of notation we call $T\na{d}$ a \emph{$d$-regular tree}. Our goal is to prove the following theorem.

\begin{theorem}\label{thm:main}
For every simple tree $T$ of maximum degree $\Delta$ and every $d\ge \max\{\Delta,3\}$, the problem {\sc $T\na{d}$-Cover} is NP-complete, even for simple input graphs.
\end{theorem}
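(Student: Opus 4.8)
The plan is to first dispose of membership in NP --- a covering projection is a pair of maps on vertices and edges, and local bijectivity can be verified in polynomial time --- and then to prove NP-hardness by reductions rooted in edge colouring. The starting point is a structural description of the covers of $T\na{d}$ \emph{by simple graphs}, which is exactly the setting demanded by the theorem. If $f\colon G\to T\na{d}$ is such a projection, then the fibres $f^{-1}(u)$, $u\in V(T)$, partition $V(G)$ into classes of one common size $n$ (equality of sizes propagates along the edges of the connected tree $T$); each tree edge $uw$ lifts to a perfect matching between $f^{-1}(u)$ and $f^{-1}(w)$, and each of the $d-\deg_T(u)$ semi-edges at $u$ lifts to a perfect matching inside $f^{-1}(u)$. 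Since $G$ is simple, the matchings inside one fibre are pairwise edge-disjoint, so the subgraph induced on $f^{-1}(u)$ is a $(d-\deg_T(u))$-regular graph carrying a proper $(d-\deg_T(u))$-edge-colouring; conversely any such system of matchings over equal-sized classes assembles into a covering projection. Thus covering $T\na{d}$ reduces to an edge-colouring problem inside the fibres, coupled through the between-fibre matchings.

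The first reduction handles the generic case in which some $u^{*}\in V(T)$ carries at least three semi-edges, i.e.\ $\deg_T(u^{*})\le d-3$; this includes $T$ a single vertex, and every tree once $d\ge 4$, since a leaf then has $d-1\ge 3$ semi-edges. Writing $k=d-\deg_T(u^{*})\ge 3$, I would reduce from the NP-complete problem of deciding whether a $k$-regular simple graph is $k$-edge-colourable (Holyer for $k=3$, Leven and Galil for larger $k$). Given such a graph $G_0$, doubled if necessary so that its order $n$ is even, I place $G_0$ as the subgraph induced on the fibre over $u^{*}$, fill every other fibre over $w$ with a fixed bipartite $(d-\deg_T(w))$-regular graph on $n$ vertices (bipartite regular graphs are always $(d-\deg_T(w))$-edge-colourable), and join adjacent fibres by a fixed perfect matching. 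By the structural description, the intended projection exists exactly when the colouring inside the fibre over $u^{*}$ can be completed, that is, exactly when $G_0$ is $k$-edge-colourable.

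The technical heart of this case is to guarantee that the planted instance is genuinely tested: that \emph{every} covering projection, not only the intended one, forces the fibre carrying $G_0$ to induce $G_0$ together with a proper colouring. Because $T\na{d}$ is vertex-transitive in degree and free of short cycles, fibres cannot be separated by local invariants, so the plan is to rigidify the gadget fibres --- using asymmetric colourable fillings and carefully chosen between-fibre matchings --- so that the induced vertex map $V(G)\to V(T)$ is determined up to an automorphism of $T$. This pins down the fibre over $u^{*}$ and makes the forward implication valid. I expect this forcing step to be the main source of case analysis and bookkeeping in the $d\ge 4$ regime.

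The remaining, and hardest, case is $d=3$ with $|V(T)|\ge 2$: now every fibre induces a graph of maximum degree at most two, the edge colouring inside a single fibre is trivial, and the hardness must instead come from the interaction between fibres. For $T=K_2$ the target $K_2\na{3}$ is a two-vertex graph, and NP-completeness for simple inputs is already available from the two-vertex dichotomy of~\cite{n:BFHJK21-MFCS}. Every other such tree has a vertex of degree two or three, which is semi-simple, so \textsc{List-$T\na{3}$-Cover} is NP-complete by~\cite{n:BFJKR24}; I would then remove the lists by attaching covering gadgets that realise the prescribed fibre-memberships through the global matching structure, yielding NP-hardness for plain simple inputs. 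Verifying that such list-elimination gadgets exist for every admissible tree, and that they preserve simplicity of the input, is the step I expect to be the most delicate part of the whole argument.
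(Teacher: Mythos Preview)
Your proposal identifies the correct case structure and the right source problems, but in both main branches the step you label as ``the main source of case analysis'' or ``the most delicate part'' is exactly the step that is missing, and in each case it is not a routine bookkeeping matter.

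For the $d\ge 4$ branch, planting $G_0$ in the fibre over $u^{*}$ only proves something if you can show that \emph{every} covering projection of the resulting graph sends that designated set of vertices into a single fibre whose target has $k$ semi-edges. You acknowledge that fibres cannot be distinguished by local degree invariants, and then propose to rigidify by ``asymmetric colourable fillings and carefully chosen between-fibre matchings''; but you give no mechanism, and it is not clear one exists along these lines. The paper does \emph{not} attempt to rigidify a single planted fibre. Instead it builds a gadget $Q=Q(T,2k)$ from $2k$ copies of $T$ in which the fibre of every leaf $w\neq a$ of $T$ induces a full $K_{k,k}$; a short argument shows any partial cover must send such a $K_{k,k}$ constantly to a leaf of $T$, and then a cut-size counting bound (so that all fibre counts lie in $\{2k-1,2k,2k+1\}$) plus an induction from leaves to a root forces the entire fibre structure and pushes the $2k$ pendant edges of $Q$ onto the semi-edges at one leaf. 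Two copies of the edge-colouring instance $F$ are then glued to these pendants across $|V_F|$ copies of $Q$. The forcing is global and counting-based, not a local rigidity of auxiliary fillings.

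For the $d=3$ branch with $|V(T)|\ge 3$, ``remove the lists by attaching covering gadgets'' is again the whole problem. When $\Delta(T)=3$ the paper makes this concrete via a gadget $Q(T,4)$ and the notion of a \emph{relevant} vertex (degree~$3$, or degree~$2$ with a non-leaf neighbour): copies of irrelevant vertices are wrapped into $C_4$'s, and since any $C_4$ in a cover of a tree must send a pair of opposite edges to semi-edges, a counting argument forces relevant vertices to relevant (hence semi-simple) images, replacing the single-vertex lists of the \textsc{List}-reduction. This argument collapses when $T$ is a path: for $P_3$ there are no relevant vertices at all, and more generally the $C_4$/counting trick does not pin down a target. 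The paper therefore treats paths by an entirely different route --- gadgets built from the cycles $C_{2n}\na{3}$ and reductions from $C_\ell$-\textsc{Hom} (when $n$ has an odd factor $\ell\ge 3$) and from $4$-\textsc{Colouring} (when $4\mid n$). Your list-elimination sketch gives no indication of how it would survive the path case.
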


Our motivation is twofold. First, the complexity of {\sc $H$-Cover} for regular simple graphs of valency at least three is known for a long time \cite{n:Fiala00b,n:KPT97}. The study of the complexity of {\sc $H$-Cover} for regular graphs $H$ with semi-edges was initiated only recently~\cite{n:BFHJK21-MFCS,n:BFJKR24}. A general result was presented, but it relies on the use of lists.
Second, simple trees are easy instances for {\sc $H$-Cover} as they correspond to isomorphisms~\cite{k:Reidemeister32}. However, we show that when trees become regularized by adding semi-edges, they yield NP-complete {\sc $H$-Cover} problems. This is a striking difference from the polynomiality of checking whether a graph covers (which means it is isomorphic to) a given tree.

The rest of the paper is organized as follows. In Section~\ref{sec:Prelim}, we review the formal definitions of undirected graphs (with multiple edges, loops and semi-edges allowed) and of covering projections between such graphs. For the convenience of the reader, we first review the definition of covering projections between simple graphs. Then we list basic graph theory notions and their meaning in the general setting, namely when semi-edges are present in the graph under consideration. We also gather several auxiliary results from other sources that will be useful in the rest of the paper.

The proof of Theorem~\ref{thm:main} is presented in the following three sections. The most general case of $d\ge 4$ is treated en bloc in Section~\ref{sec:4-regular}. For $d=3$, we distinguish two cases. If the underlying simple tree $T$ has maximum degree 3, i.e., if $d=\Delta(T)=3$, we show that an approach used in \cite{n:BFJKR22-IWOCA} for showing NP-hardness of the list covering version in the case that the parameter graph has a semi-simple vertex can be utilized in our setting. This is shown in Section~\ref{sec:deg3}. The case of $d=3$ and $\Delta(T)=2$, i.e., when the underlying simple tree is a path, is singled out in Section~\ref{sec:Paths}. The last section then contains concluding remarks.

\section{Preliminaries}\label{sec:Prelim}

For integers $i$, $j$, the notation $[i,j]$ stands for the set $\{i,i+1,\dots,j\}$. It is an empty set, when $i>j$. We often abbreviate $[1,j]$ by $[j]$.

In this paper we only consider undirected graphs without further stressing this. A \emph{simple graph} is a pair $G=(V,E)$, where $V$ is a set of \emph{vertices} and $E\subseteq \binom{V}{2}$ is a set of \emph{edges}. Edges are viewed as two-element subsets of the vertex set, but for the sake of brevity, we will often write $uv$ as a short-hand notation for $\{u,v\}$. For a graph $G$, we denote its vertex set by $V_G$ and its edge set by $E_G$. Two vertices $u$ and $v$ are called \emph{adjacent} if $uv \in E_G$. A \emph{path} in $G$ is a sequence of distinct vertices such that every two consecutive ones are adjacent. A cycle in $G$ is a path that connects two adjacent vertices (the edge between these vertices should not belong to the path, but it belongs to the cycle). A graph is \emph{acyclic} if it contains no cycles. A graph is \emph{connected} if any two vertices are joined by a path. A \emph{simple tree} is a simple connected acyclic graph. The path on $n$ vertices $\{1,2,\ldots,n\}$ is denoted by $P_n$.

The (open) \emph{neighborhood} $N_G(u)$ of a vertex $u\in V_G$ in a simple graph $G$ is the set of all vertices adjacent to $u$. 
The \emph{link neighborhood} of $u$ is the set of edges incident with $u$ and we denote it by $N'_G(u)$.  
The \emph{degree} of $u$ in $G$, denoted by $\deg_G u$, is the number of vertices adjacent to $u$ (or equivalently the number of edges incident with $u$), formally $\deg_G u=|N_G(u)|=|N'_G(u)|$. If the graph $G$ is clear from the context, we often omit the subscripts and write simply $N(u), N'(u)$ and $\deg u$.

In a \emph{cubic} graph every vertex is of degree 3, while in a \emph{subcubic} one, every vertex is of degree at most 3.

A \emph{covering projection} from a simple graph $G$ (guest) to a simple graph $H$ (host) is a mapping $f: V_G\to V_H$ such that for every vertex $u\in V_G$, the restriction $f|_{N_G(u)}$ is a bijection between $N_G(u)$ and $N_H(f(u))$. 

A \emph{partial covering projection} is defined analogously, but the restriction $f|_{N_G(u)}$ is required to be an injection between $N_G(u)$ and $N_H(f(u))$ for each $u\in V_G$. 

In this paper we consider a more general definition of undirected graphs. From now on, a vertex may be connected to itself by a loop (contributing by 2 to its degree), or it might be incident with a semi-edge (an edge-type object having only one end-vertex and contributing by 1 to the degree of this vertex). Moreover, edges, loops, and semi-edges may have associated \emph{multiplicities}. Next, we present the formal definition. Although we do not work with graphs containing loops in this paper, we provide the definition in this more general form to maintain consistency with other existing literature.

\begin{definition} 
A \emph{graph} is a triple $G=(V,\Lambda,\iota)$, where $V$ is the set of vertices, $\Lambda=E\cup L\cup S$  is the set of \emph{links} distinguished as \emph{edges}, \emph{loops} 
and \emph{semi-edges}, respectively, and $\iota: \Lambda \to \binom{V}2 \cup V$ is an \emph{incidence function}, such that $\iota(E)\subseteq \binom{V}2$ and $\iota(L\cup S)\subseteq V$.
\end{definition}

The degree of $u\in V$ is then $\deg u=|\{e\in E: u\in \iota(e)\}|+2|\{l\in L: u= \iota(l)\}|+|\{s\in S: u=\iota(s)\}|$. Each loop is counted twice, as we see it as a topological curve that has both endpoints in $u$. The \emph{link-neighborhood} corresponding to the links originating from $u$ is $N'(u)=\{e\in E: u\in \iota(e)\} \cup \{l\in L: u=\iota(l)\}\times[2] \cup \{s\in S: u=\iota(s)\}$.
This choice of link-neighborhood is conform with the above definition of the degree $\deg u=|N'(u)|$.
A vertex is called \emph{semi-simple} if it is incident with no loops, with no multiple edges and with at most one semi-edge. Most standard graph-theoretical notions extend naturally from simple graphs to general ones. The graph is called \emph{$d$-regular} if all of its vertices have degree $d$, and it is called \emph{regular} if it is $d$-regular for some $d$. Paths and cycles are considered as sequences of links, rather than sequences of vertices. A path is a sequence of edges and semi-edges such that any two consecutive ones share a vertex and all vertices appearing in the path are different; it follows that a path may start and/or end with a semi-edge, but all inner links are normal edges. Such a path is \emph{open} if it starts and ends with semi-edges, and it is \emph{half-open} if exactly one of the terminal links is a semi-edge. On top of simple cycles (i.e., cycles in the sense of cycles in simple graphs), any two parallel edges form a cycle of length $2$, and a loop is a cycle of length $1$. No cycle contains any semi-edge. A \emph{tree} is again defined as a connected acyclic graph. That means that trees have no multiple edges, and no loops, but they may contain semi-edges. A graph is called \emph{bipartite} if its vertex set can be partitioned into two disjoint sets, each of them being stable in the sense that it contains no links (not even semi-edges).    

\begin{definition}
A \emph{covering projection} $G\to H$ between two graphs is a mapping $f:(V_G\cup \Lambda_G) \to (V_H\cup \Lambda_H)$ such that
\begin{itemize}
\item vertices are mapped onto vertices, i.e., $f(V_G)\subseteq V_H$, and 
links are mapped onto links, i.e., $f(\Lambda_G)\subseteq \Lambda_H$,
\item for each edge $e\in E_H$ with $\iota(e)=\{u,v\}$, it holds that its preimage (also referred to as the lift) $f^{-1}(e)$ is a perfect matching between the sets $f^{-1}(u)$ and $f^{-1}(v)$
\item for each loop $l\in L_H$ with $\iota(l)=u$, it holds that its preimage (lift) $f^{-1}(l)$ is a disjoint union of cycles (including cycles of length 2 --- pairs of edges with the same end-points, and cycles of length 1 --- loops) spanning the set $f^{-1}(u)$,
\item for each semi-edge $s\in S_H$ with $ \iota(s)=u$, it holds that its preimage (lift) $f^{-1}(s)$ is a disjoint union of edges (i.e., a matching) and semi-edges spanning the set $f^{-1}(u)$. 
\end{itemize}
\end{definition}

We say that a graph $G$ covers a graph $H$ if it allows a covering projection from $G$ to $H$.

\begin{observation}
If $H$ is connected and $G$ is non-empty, then every covering projection $f:G\to H$ is surjective (both on vertices and links) and preserves incidences (i.e., $u\in \iota_G(e)$ if and only if $f(u)\in \iota_H(f(e))$ for every $u\in V_G$ and every $e\in \Lambda_G$. Moreover, semi-edges may only be mapped onto semi-edges, and loops may only be mapped onto loops.
\end{observation}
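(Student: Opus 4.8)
The plan is to read off all four assertions from the preimage conditions in the definition of a covering projection, treating the two local statements (incidence preservation and type preservation of links) first and then bootstrapping to the two global statements (surjectivity on vertices and on links) via connectedness of $H$.

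For incidence preservation I would fix a link $e\in\Lambda_G$, put $e'=f(e)\in\Lambda_H$, and note $e\in f^{-1}(e')$. If $e'$ is an edge with $\iota_H(e')=\{a,a'\}$, then $f^{-1}(e')$ is a perfect matching between $f^{-1}(a)$ and $f^{-1}(a')$, so the two endpoints of $e$ are one vertex of $f^{-1}(a)$ and one of $f^{-1}(a')$, giving $f(\iota_G(e))=\{a,a'\}=\iota_H(e')$. If $e'$ is a loop (resp.\ a semi-edge) at $a$, its lift spans $f^{-1}(a)$ by cycles (resp.\ by edges and semi-edges), so both endpoints, resp.\ the single endpoint, of $e$ lie in $f^{-1}(a)$, and again $f(\iota_G(e))=\iota_H(e')$. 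In every case $f(\iota_G(e))=\iota_H(f(e))$, which yields the implication $u\in\iota_G(e)\Rightarrow f(u)\in\iota_H(f(e))$.

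For type preservation I would reuse these descriptions read the other way round, together with the fact that $E_G,L_G,S_G$ are pairwise disjoint. If $f(e)$ were an edge, then $f^{-1}(f(e))$ is a matching and $e$ would be an edge; if $f(e)$ were a loop, $f^{-1}(f(e))$ contains only edges and loops; if $f(e)$ were a semi-edge, $f^{-1}(f(e))$ contains only edges and semi-edges. Hence a loop of $G$ can map neither to an edge nor to a semi-edge, and so maps to a loop, and a semi-edge of $G$ can map neither to an edge nor to a loop, and so maps to a semi-edge. For surjectivity I would set $A=f(V_G)$, which is non-empty since $G$ is non-empty, and use the \emph{saturation} inherent in ``perfect matching'': if $a\in A$ and $e'$ is an edge of $H$ with $\iota_H(e')=\{a,a'\}$, then a perfect matching between $f^{-1}(a)\neq\emptyset$ and $f^{-1}(a')$ forces $f^{-1}(a')\neq\emptyset$, i.e.\ $a'\in A$. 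Thus $A$ is closed under edge-neighbours, and connectedness of $H$ gives $A=V_H$; surjectivity on links is then immediate, since every endpoint of every link of $H$ now has a non-empty preimage, making the relevant matching or spanning lift non-empty.

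The step that needs the most care --- and that I expect to be the only real obstacle --- is the \emph{converse} of the displayed incidence equivalence. The argument above delivers the set equality $f(\iota_G(e))=\iota_H(f(e))$, hence the forward implication, but the reverse reading $f(u)\in\iota_H(f(e))\Rightarrow u\in\iota_G(e)$ must be understood locally, relative to the fixed link $e$ and its own endpoints: since several vertices of $G$ may share an image, a vertex mapping to an endpoint of $f(e)$ need not itself be an endpoint of $e$. The content that is both correct and sufficient for the later reductions is the equality $f(\iota_G(e))=\iota_H(f(e))$ together with surjectivity and type preservation, so I would state and use the observation in that form.
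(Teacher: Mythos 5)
Your proof is correct, and it takes the only route available: the paper states this as an \emph{Observation} without any proof, treating it as immediate from the definition, and your argument is precisely that definitional unpacking made explicit. The local part (reading the matching/cycle/spanning descriptions of the lifts to get incidence and type preservation) and the global part (the saturation inherent in ``perfect matching'' propagates non-emptiness of fibers along edges, so connectedness of $H$ gives surjectivity on vertices and then on links) are exactly what the authors leave implicit. One remark on the surjectivity step: closure of $f(V_G)$ under edge-neighbours suffices because in this paper's conventions loops and semi-edges never join distinct vertices, so connectivity of $H$ is witnessed by ordinary edges alone --- you use this silently, and it is worth making explicit. Your closing caveat is also a genuinely valuable catch rather than pedantry: as literally phrased, the biconditional ``$u\in\iota_G(e)$ if and only if $f(u)\in\iota_H(f(e))$'' fails in the backward direction for every $k$-fold cover with $k\ge 2$, since a vertex in the fiber of an endpoint of $f(e)$ need not itself be an endpoint of $e$; the correct (and for all later uses sufficient) content is the set equality $f(\iota_G(e))=\iota_H(f(e))$ together with type preservation and surjectivity, which is the form in which you state it and the form in which the paper in fact uses the observation.
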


It also follows from the definition that a covering projection yields, for every vertex $u\in V_G$, a bijective mapping between $N'_G(u)$ and $N'_H(f(u))$ --- one only has to alternate on $l\times [2]$ when mapping a cycle to a loop $l$. Therefore, graph covering projections are also referred to as \emph{locally bijective homomorphisms}. A \emph{partial covering projection} loosens the bijection requirement to an injection. It is an incidence preserving mapping $f: G\to H$ such that  $N'_G(u)$ is mapped injectively into $N'_H(f(u))$ for every vertex $u$ of $G$.

When $f$ is a covering projection, then the lift $f^{-1}(x)$ is also referred to as the \emph{fiber} of (a vertex or a link) $x$.

A direct consequence of the well-known \emph{path lifting theorem}~\cite{k:Reidemeister32} is that when $G$ covers a connected graph $H$, then fibers of any two vertices in $H$ have the same cardinality. Such a covering is called \emph{$k$-fold}, where $k$ is the cardinality of the fibers.
The path lifting theorem also yields the following statement.

\begin{observation}\label{obs:partial-path-lift}
Let $f$ be a covering projection from $G$ to a connected graph $H$, let $S$ be a cutset of $G$ and let $A$ be a component of $G\setminus S$. Then for any two vertices $u,v\in V_H$, it holds that 
$|f^{-1}(u)\cap A|-|f^{-1}(v)\cap A|\le |S|$.
\end{observation}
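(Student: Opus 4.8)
The plan is to exploit the \emph{path lifting theorem}~\cite{k:Reidemeister32} in order to transport the fiber over $u$ onto the fiber over $v$ along a single path of $H$, and then to control how this transport interacts with the cutset $S$. Since $H$ is connected and we may assume $u\neq v$ (otherwise the claimed inequality reads $0\le |S|$ and is trivial), I would fix a path $P$ in $H$ from $u$ to $v$; as $u$ and $v$ are distinct, $P$ consists of ordinary edges only, say $P=(w_0,\dots,w_\ell)$ with $w_0=u$ and $w_\ell=v$. For each vertex $x\in f^{-1}(u)$ the path lifting theorem provides a unique lift $P_x$ of $P$ starting at $x$, and $P_x$ terminates at a vertex $\phi(x)\in f^{-1}(v)$, defining a map $\phi\colon f^{-1}(u)\to f^{-1}(v)$. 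Since the covering is $k$-fold, both fibers have the same size, so once $\phi$ is seen to be injective it is in fact a bijection.

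The key structural point I would establish is that the lifts $\{P_x : x\in f^{-1}(u)\}$ are pairwise \emph{vertex-disjoint}. All the $w_i$ are distinct because $P$ is a path, so if two lifts $P_x$ and $P_{x'}$ shared a vertex, that vertex would project simultaneously to some $w_i$ and some $w_j$, forcing $i=j$; the two lifts would then agree in their $i$-th vertex and, by the uniqueness of lifts applied both forward and backward, coincide entirely, whence $x=x'$. In particular $\phi$ is injective. This disjointness is precisely what prevents a naive edge-by-edge estimate from accumulating a factor proportional to $\ell$, and it is what keeps the final bound at $|S|$.

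With disjointness in hand the counting is routine. I would partition $f^{-1}(u)\cap A$ into those $x$ whose lift $P_x$ stays entirely inside $A$ and those whose lift leaves $A$. A lift that leaves $A$ must contain a vertex of $S$: starting in the component $A$ of $G\setminus S$, the first vertex of $P_x$ lying outside $A$ is adjacent to a vertex of $A$ and therefore lies in $S$. Since the lifts are vertex-disjoint, distinct leaving lifts occupy distinct vertices of $S$, so there are at most $|S|$ of them. Each of the remaining $x$ satisfies $\phi(x)\in f^{-1}(v)\cap A$, and as $\phi$ is injective these contribute at most $|f^{-1}(v)\cap A|$. Adding the two groups gives $|f^{-1}(u)\cap A|\le |f^{-1}(v)\cap A|+|S|$, which is exactly the asserted inequality.

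The only genuinely delicate step is the vertex-disjointness of the lifts, which rests on $P$ being a path (distinct vertices) rather than an arbitrary walk, combined with the bidirectional uniqueness of lifts; everything else is bookkeeping. I would also verify that the semi-edge setting causes no trouble here: because $P$ avoids loops and semi-edges, each of its edges lifts to a perfect matching between consecutive fibers, so the forward lift is obtained simply by following these matchings, and no complication arises from $G$ or $H$ carrying semi-edges elsewhere.
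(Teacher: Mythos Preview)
Your proof is correct and follows essentially the same route as the paper's: fix a path $P$ in $H$ from $u$ to $v$, observe via the path lifting theorem that $f^{-1}(P)$ decomposes into pairwise disjoint paths in $G$, and then note that at most $|S|$ of these paths can start in $A$ and end outside. The paper compresses all of this into three sentences and simply cites Reidemeister for the disjointness, whereas you spell out the uniqueness-of-lifts argument explicitly; the underlying mechanism is identical.
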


\begin{proof}
As $H$ is connected, it contains a path $P$ connecting $u$ to $v$. As $f$ is locally bijective, $f^{-1}(P)$ is a disjoint union of paths in $G$~\cite{k:Reidemeister32}. 
At most $|S|$ of such paths may have one endpoint in $A$ and the other outside, hence the inequality follows.
\end{proof}

To simplify the hardness reductions, we often use the following proposition. For a graph $H$, the graph $H\times K_2$ is the graph with vertex set $\{u',u'':u\in V_H\}$ and edges formed as follows -- for every every normal edge with end-vertices $u$ and $v$, we add edges $u'v''$ and $u''v'$ to $H\times K_2$; for every semi-edge incident with vertex $u$, wee add the edge $u'u''$, and for every loop incident with $u$, we add two edges incident with $u'$ and $u''$.

\begin{proposition}[\!\cite{n:FK02}, Theorem 2.6]\label{prop:K2cover}
If $H$ is not bipartite, then a graph $G$ covers $H\times K_2$ if and only if $G$ is bipartite and covers $H$.
\end{proposition}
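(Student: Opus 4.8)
The plan is to prove Proposition~\ref{prop:K2cover} by establishing both implications separately, using the structural facts about covering projections collected earlier in the excerpt, most notably the path lifting theorem and the observation that fibers have equal cardinality over a connected target.

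First I would prove the easier direction: if $G$ is bipartite and covers $H$, then $G$ covers $H\times K_2$. Since $H$ is connected (which I will argue follows, or simply assume as part of the standing setup), $H\times K_2$ is its canonical double cover. I would fix a $2$-coloring of $G$ with color classes $B_0,B_1$, fix a covering projection $g\colon G\to H$, and then define $f\colon G\to H\times K_2$ by setting $f(u)=g(u)'$ if $u\in B_0$ and $f(u)=g(u)''$ if $u\in B_1$, with the corresponding assignment on links. The key verification is that this is locally bijective: for a vertex $u$ in class $B_0$, all its link-neighbors lie in $B_1$, and each edge $uv$ of $G$ maps to $g(u)'g(v)''$, so the link-neighborhood of $u$ maps bijectively onto the link-neighborhood of $g(u)'$ in $H\times K_2$. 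I would check the three link-types (edges, loops, semi-edges) against the three clauses of the covering definition; the semi-edge case uses that $G$ bipartite forces each semi-edge of $H$ to lift in $G$ only to matching edges (never to semi-edges, since a semi-edge's single endpoint cannot be $2$-colored consistently with being matched to itself), which matches the fact that $H\times K_2$ replaces each semi-edge by a genuine edge $u'u''$.

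For the reverse direction, suppose $G$ covers $H\times K_2$ via a projection $h$. I would first argue $G$ is bipartite: since $H$ is \emph{not} bipartite, $H\times K_2$ is bipartite and connected (connectedness of the double cover is exactly where non-bipartiteness of $H$ is used), and covering projections pull back a $2$-coloring---any closed walk in $G$ projects to a closed walk of the same length in the bipartite graph $H\times K_2$, hence has even length, so $G$ has no odd closed walk and is bipartite. Then, composing $h$ with the natural projection $\pi\colon H\times K_2\to H$ (sending $u',u''\mapsto u$) yields a map $G\to H$; I would verify this composite is itself a covering projection by checking that $\pi$ is locally bijective and that composition of covering projections is again a covering projection, so that $G$ covers $H$.

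The main obstacle I anticipate is the clean handling of semi-edges and loops in both directions, since the reduction $H\times K_2$ treats them asymmetrically (a semi-edge becomes one edge, a loop becomes a pair of edges), and the covering definition imposes different matching/cycle conditions on their fibers. In particular, in the forward direction I must confirm that a bipartite $G$ covering $H$ never lifts a semi-edge of $H$ to a semi-edge of $G$, and in the reverse direction I must confirm that $\pi$ correctly folds the two lifted edges back to a loop or semi-edge so that local bijectivity is preserved; verifying $\pi$ is a covering and that the non-bipartiteness of $H$ genuinely forces connectedness of $H\times K_2$ are the two points requiring the most care. Since this is quoted as a known result (\cite{n:FK02}), I would expect the actual proof to either cite it directly or give precisely this double-cover argument.
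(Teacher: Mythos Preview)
The paper does not prove this proposition at all; it is stated with a citation to \cite{n:FK02} and immediately followed by Corollary~\ref{cor:K2cover}, with no intervening proof environment. You correctly anticipated this possibility in your final sentence.

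Your proof sketch is the standard double-cover argument and is essentially correct. A couple of small remarks: for the forward direction, note that the paper's definition of \emph{bipartite} explicitly forbids semi-edges in either class, so a bipartite $G$ simply has no semi-edges, which cleanly handles the semi-edge lifting issue you flagged. For the reverse direction, your use of non-bipartiteness of $H$ to force connectedness of $H\times K_2$ is the right place to invoke the hypothesis, and your pull-back-of-the-$2$-coloring argument is fine; alternatively one can just take the preimages of the two classes $\{u':u\in V_H\}$ and $\{u'':u\in V_H\}$ directly. The verification that the natural projection $\pi\colon H\times K_2\to H$ is a covering projection (with loops and semi-edges folding correctly) is routine from the explicit description of $H\times K_2$ given in the paragraph preceding the proposition.
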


\begin{corollary}\label{cor:K2cover} The following hold for any graph $H$.
If {\sc $(H\times K_2)$-Cover} is NP-complete, then {\sc $H$-Cover} is NP-complete as well. And if {\sc $(H\times K_2)$-Cover} is NP-complete for simple input graphs, then so is {\sc $H$-Cover}.
\end{corollary}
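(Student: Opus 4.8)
The plan is to derive Corollary~\ref{cor:K2cover} directly from Proposition~\ref{prop:K2cover} by a standard polynomial-time reduction argument, handling the two claimed implications in parallel. First I would observe that the nontrivial direction is reducing \textsc{$H$-Cover} to \textsc{$(H\times K_2)$-Cover}; NP-membership of either problem is routine (a covering projection is a polynomially-checkable certificate), so the content is the hardness transfer. The key case distinction is whether $H$ is bipartite or not, because Proposition~\ref{prop:K2cover} only speaks about non-bipartite $H$.

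\textbf{Non-bipartite case.}
Suppose $H$ is not bipartite and assume \textsc{$(H\times K_2)$-Cover} is NP-complete. I would reduce \textsc{$(H\times K_2)$-Cover} to \textsc{$H$-Cover} as follows. Given an input graph $G$ for the former, I first test in polynomial time whether $G$ is bipartite. If $G$ is not bipartite, then by Proposition~\ref{prop:K2cover} it cannot cover $H\times K_2$, so I output a fixed trivial NO-instance of \textsc{$H$-Cover}. If $G$ is bipartite, then Proposition~\ref{prop:K2cover} tells us that $G$ covers $H\times K_2$ \emph{if and only if} $G$ covers $H$, so I simply pass $G$ unchanged to \textsc{$H$-Cover}. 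This is a correct many-one reduction computable in polynomial time, and it preserves simplicity of the input (I never modify $G$), which immediately yields the simple-input version of the statement as well: if \textsc{$(H\times K_2)$-Cover} is NP-complete for simple inputs, the same reduction restricted to simple $G$ shows \textsc{$H$-Cover} is NP-complete for simple inputs.

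\textbf{Bipartite case.}
The remaining worry is that Proposition~\ref{prop:K2cover} assumes $H$ is non-bipartite, so I must argue the corollary still holds (vacuously or otherwise) when $H$ is bipartite. Here I would note that if $H$ is bipartite, then $H\times K_2$ is disconnected, consisting of two copies of $H$. In that situation a connected simple input $G$ covers $H\times K_2$ if and only if it covers $H$ (a connected cover maps into a single connected component), and for general inputs one component of $G$ covers each copy; in all cases the hardness of \textsc{$(H\times K_2)$-Cover} transfers to \textsc{$H$-Cover} by essentially the identity reduction. Thus the conclusion holds regardless of the parity of $H$.

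I expect the main obstacle to be the bipartite case, precisely because the cited proposition does not cover it, so one must verify by hand that the reduction still goes through (or that the hypothesis is effectively vacuous for the applications in this paper, where the target graphs $T\na{d}$ may be non-bipartite). The non-bipartite case is genuinely immediate from Proposition~\ref{prop:K2cover}; the only subtlety worth spelling out is the polynomial-time bipartiteness test on $G$ that lets the reduction branch correctly, together with the remark that this branching never introduces loops, multi-edges, or semi-edges and hence preserves the simple-input restriction throughout.
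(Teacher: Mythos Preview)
Your proposal is correct and is exactly the intended (and in the paper, implicit) derivation: the paper states the corollary without proof, and the one-line argument is precisely your non-bipartite case --- test bipartiteness of $G$, output a fixed \textsc{no}-instance if $G$ is not bipartite, and otherwise pass $G$ through unchanged, using Proposition~\ref{prop:K2cover} for correctness. Your observation that this reduction never alters $G$ and hence preserves simplicity is the right way to get the simple-input version, and your handling of the bipartite-$H$ case (where $H\times K_2$ splits into two copies of $H$) is an extra bit of care the paper does not spell out and does not need, since in all applications here $H=T\na{d}$ carries semi-edges and is therefore non-bipartite.
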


The key building block of our NP-hardness reduction is the graph $G{:}w$ obtained from $G$ by splitting vertex $w$ into $k$ pendant vertices of degree 1. For each edge $e$ of $G$ incident with $w$, we formally keep this edge with the same name in $G{:}w$ and denote its pendant vertex of degree 1 by $w_e$. Then we have the following proposition.

\begin{proposition}[\!\cite{n:BFJKR24}, Proposition 2 (b--c)]
\label{prop:H_edge_colorable_properties} 
Let $H$ be a connected $k$-regular $k$-edge-colorable graph with no loops or semi-edges, let $G$ be a graph that covers $H$ and let $w$ be a vertex of $G$. Then
\begin{itemize}
\item in every partial covering projection from $G{:}w$ onto $H$, the pendant vertices $w_e, e\in N'_G(w)$ are mapped onto the same vertex of $H$;
\item in every partial covering projection from $G{:}w$ onto $H$, the pendant edges are mapped onto different edges (incident with the image of the pendant vertices).
\end{itemize}
\end{proposition}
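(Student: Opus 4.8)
The plan is to prove the two claims by combining the local-bijectivity structure of covering projections with the path-lifting theorem, exploiting that $H$ is $k$-regular and $k$-edge-colorable (hence admits a proper edge coloring $c:E_H\to[k]$ in which the $k$ edges incident with any vertex receive all $k$ colors). The construction $G{:}w$ replaces $w$ by $\deg_G(w)=k$ pendant vertices $w_e$, one per edge $e\in N'_G(w)$; each $w_e$ has degree $1$, so a partial covering projection must map its unique incident edge to an edge of $H$ and map $w_e$ to one endpoint of that edge's image.

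First I would establish the second bullet, that the pendant edges go to distinct edges of $H$. Let $f$ be a partial covering projection from $G{:}w$ onto $H$. Fix any vertex $v$ that is a candidate common image of the pendant vertices; the edges incident with $v$ in $H$ form exactly one edge of each color. Because $G$ covers $H$, the original vertex $w$ had its $k$ incident edges mapped bijectively onto $N'_H(f(w))$, so after the split the pendant edges inherit $k$ distinct target edges. The point to verify carefully is that even allowing the pendant vertices to land on different vertices of $H$, two distinct pendant edges cannot collapse onto the same edge of $H$. For this I would invoke the injectivity built into partial covering projections together with a counting/color argument: assign to each edge the color $c$, note that the lift of a color class is a perfect matching of $H$'s structure, and argue that two pendant edges mapped to the same edge $e$ of $H$ would force a defect in the matching $f^{-1}(e)$ elsewhere in $G{:}w$, contradicting that $G$ itself is a genuine cover off the split vertex.

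Next I would deduce the first bullet, that all pendant vertices map to a common vertex of $H$. Here the key tool is Observation~\ref{obs:partial-path-lift} (the path-lifting / fiber-balancing statement) applied in the graph $G$, regarded as $G{:}w$ with the pendant copies reidentified. Since the two incident edges of any edge of $H$ already determine, via the edge coloring, a consistent bipartition-like labeling of the fibers, and since $H$ is connected, lifting a path of $H$ through $f$ propagates the image of one pendant vertex to all the others. Concretely, if two pendant vertices $w_e$ and $w_{e'}$ mapped to distinct vertices $a\neq b$ of $H$, one lifts a path from $a$ to $b$ in $H$ and traces its preimage back through the regular structure of $G$ to reach a contradiction with the local bijection condition at the internal (non-split) vertices, which must see each color exactly once.

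The main obstacle, I expect, is the second bullet rather than the first: ruling out two pendant edges sharing an image requires one to use the full strength of $k$-edge-colorability and not merely $k$-regularity, since without a proper edge coloring the matching structure of $f^{-1}(e)$ need not be rigid. I would therefore lean on the coloring $c$ to turn the covering condition into $k$ independent perfect-matching conditions, one per color, and argue edge-by-color that the split cannot introduce a coincidence. Once that rigidity is in place, the first bullet follows quickly from connectivity and path lifting, so the coloring argument is the crux.
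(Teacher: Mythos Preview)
First, note that the paper does not supply its own proof of this proposition; it is quoted from \cite{n:BFJKR24} with the remark that the hypotheses listed here are exactly those used in the original argument. So there is no in-paper proof to compare against. That said, your sketch has a genuine gap. You repeatedly conflate the \emph{arbitrary} partial covering projection $f\colon G{:}w\to H$ under consideration with some fixed covering $g\colon G\to H$ whose existence is merely assumed. The sentence ``Because $G$ covers $H$, the original vertex $w$ had its $k$ incident edges mapped bijectively onto $N'_H(f(w))$'' has no content: $w$ does not exist in $G{:}w$, and the map $g$ tells you nothing about what an unrelated $f$ does on the pendant edges. The same confusion recurs when you propose to invoke Observation~\ref{obs:partial-path-lift} ``in the graph $G$, regarded as $G{:}w$ with the pendant copies reidentified'': that observation requires a covering projection from $G$, and reidentifying the pendant vertices yields one only \emph{after} you know all $f(w_e)$ coincide, which is exactly the first bullet. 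Your path-lifting step is therefore circular.

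You also have the relative difficulty backwards. The second bullet is the easy half: the hypothesis that $G$ covers $H$ forces $|V_G|$ to be a multiple of $|V_H|$, hence even (a $k$-regular $k$-edge-colourable graph has a perfect matching), so $G{:}w$ has an odd number $|V_G|-1$ of non-pendant vertices. Since every non-pendant vertex has degree $k$, the partial cover $f$ is in fact locally \emph{bijective} there; pulling the edge-colouring of $H$ back along $f$, each colour class is a matching saturating all non-pendant vertices, and parity forces it to contain an odd --- hence at least one --- pendant edge. With $k$ colours and $k$ pendant edges this gives exactly one per colour, so the pendant edges have pairwise distinct images. The first bullet is the substantive part, and it does not ``follow quickly from connectivity and path lifting'' as you claim: lifting a path in $H$ from $a=f(w_e)$ to $b=f(w_{e'})$ starting at $w_e$ lands at \emph{some} vertex of $f^{-1}(b)$, with no reason for it to be $w_{e'}$. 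A correct argument needs a further fibre-counting step (comparing $|f^{-1}(xy)|$ from the $x$-side and the $y$-side to constrain the non-pendant fibre sizes across every edge of $H$), and your plan does not attempt anything of that shape.
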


This proposition describes cases of \cite[Proposition 2 (b--c)]{n:BFJKR24} applied to a broader class of graphs as only the assumed properties here are used in the proof in \cite{n:BFJKR24}.

In our reduction we will use the construction of a \emph{multicover}, introduced in \cite{n:KPT97}, and denoted by $M$, which can also be constructed for multigraphs~\cite{n:BFJKR24}.

\begin{proposition}[\!\cite{n:BFJKR24}, Proposition 1]
\label{prop:multicoverexistence}
Let $H$ be a connected $k$-regular $k$-edge-colorable graph with no loops or semi-edges. Then there exists a connected simple bipartite $k$-regular $k$-edge-colorable graph $M$ and a vertex $w \in V_M$ such that 
for every vertex $u\in V_H$ and for any bijection from $N'_M(w)$ onto $N'_H(u)$, there exists a covering projection from $M$ to $H$ which extends this bijection and maps $w$ to $u$.
\end{proposition}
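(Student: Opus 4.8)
The plan is to treat Proposition~\ref{prop:multicoverexistence} as a \emph{universality} statement and to build $M$ as a sufficiently rich finite cover of $H$. First I would fix a proper $k$-edge-coloring $\phi\colon E_H\to[k]$, which (since $H$ is $k$-regular) splits $E_H$ into $k$ perfect matchings $M_1,\dots,M_k$, so that at every vertex of $H$ the incident edges carry pairwise distinct colors. Fixing one base covering projection $f_0\colon M\to H$ and pulling $\phi$ back along it equips $M$ with a proper $k$-edge-coloring as well; in particular $N'_M(w)$ and $N'_H(u)$ become canonically indexed by $[k]$. Under this indexing an arbitrary bijection $\beta\colon N'_M(w)\to N'_H(u)$ is nothing but a permutation $\sigma\in S_k$ (recording the color a $\beta$-image inherits as a function of the color of its source). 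Hence the statement reduces to the following: for every vertex $u\in V_H$ and every $\sigma\in S_k$ there is a covering projection $M\to H$ sending $w$ to $u$ and inducing $\sigma$ on the colors at $w$. The two degrees of freedom --- the target vertex $u$ and the color permutation $\sigma$ --- are exactly what $M$ must be able to realize.

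Second, I would build $M$ so that these degrees of freedom are available. Note that if $H$ happened to possess automorphisms acting transitively on its vertices and permuting the color classes $M_1,\dots,M_k$ in all $k!$ ways, then composing $f_0$ with (lifts of) such automorphisms would already produce all pairs $(u,\sigma)$; the content of the proposition is to manufacture this flexibility inside a \emph{single} connected $M$ even when $\mathrm{Aut}(H)$ is poor. To do so I would use a permutation-voltage (derived-graph) construction over $H$: the fiber of $M$ carries, besides a ``position'' coordinate, a ``color frame'' that is updated when one traverses an edge, and the voltages are chosen so that the monodromy around $w$ is rich enough that (i) $M$ is connected and (ii) following suitable lifts of walks in $H$ lets the fixed vertex $w$ be folded onto every target $u$ with every prescribed frame at $w$. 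Here the $k$-edge-colorability of $H$ is used to organize the construction matching-by-matching, so that each color class lifts to a perfect matching of $M$ and the resulting map is genuinely locally bijective.

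Third, I would verify the remaining structural requirements. Being $k$-regular and $k$-edge-colorable is inherited by pulling back the coloring $\phi$ along $f_0$, since each color class lifts to a perfect matching. Simplicity (no loops, no multiple edges) can be enforced by passing to a cover of large girth --- concretely, by taking a fiber product of the engineered cover with a high-girth cover of $H$, which preserves all the covering projections constructed above while separating any parallel lifts. Finally, bipartiteness can be obtained for free: if the cover produced so far is not bipartite, I would replace it by its bipartite double; since the double covers $M$ and hence covers $H$, and since the edges at the lifted copy of $w$ are in canonical bijection with those at $w$, composing with the projection of the double onto $M$ transports the universality property to the lifted vertex (this is the same bipartite-double mechanism underlying Proposition~\ref{prop:K2cover} and Corollary~\ref{cor:K2cover}).

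The step I expect to be the main obstacle is the target-vertex flexibility in the second paragraph: producing one connected $M$ with one marked $w$ that folds onto $H$ in enough genuinely different ways to hit \emph{every} $u\in V_H$ with \emph{every} color permutation $\sigma$. When $\mathrm{Aut}(H)$ is trivial this flexibility cannot be borrowed from symmetries of $H$ and must be engineered into the monodromy of $M$, and one has to do this while simultaneously keeping $M$ finite, connected, simple, and (after doubling) bipartite. Verifying that a single voltage assignment meets all these constraints at once --- in particular that the consistency conditions on the non-tree edges are satisfied for all $(u,\sigma)$ simultaneously --- is the technical heart of the argument.
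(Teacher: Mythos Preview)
The paper does not prove this proposition at all: it is quoted verbatim as \cite[Proposition~1]{n:BFJKR24}, with the multicover construction itself attributed to~\cite{n:KPT97}. The only additional remark the paper makes is the sentence immediately following the statement, namely that bipartiteness of $M$ may be assumed without loss of generality by replacing $M$ with $M\times K_2$ --- which is exactly the mechanism you invoke in your third paragraph. So there is no ``paper's own proof'' to compare your proposal against; the result is used here as a black box.

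As for the proposal itself, it is a plan rather than a proof. Your reduction of the problem to realizing every pair $(u,\sigma)\in V_H\times S_k$ by a single cover is correct, and the idea of manufacturing this via a permutation-voltage (derived-graph) construction is in the spirit of the original multicover. However, you have correctly identified the place where the real work lies and then stopped short of doing it: you do not specify the voltage assignment, and you do not verify that a single assignment achieves connectedness together with the full $(u,\sigma)$-universality. The concrete construction in~\cite{n:KPT97,n:BFJKR24} is more explicit than your sketch --- it builds $M$ directly on a product-type vertex set (roughly $V_H$ crossed with an index set large enough to encode both the target vertex and the color permutation) and then checks the extension property by hand, rather than appealing to an abstract monodromy argument. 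If you want to turn your outline into a proof, you will need either to reproduce that explicit construction or to actually exhibit voltages and prove they work; the high-girth fiber-product trick for simplicity and the $\times K_2$ trick for bipartiteness are fine once the core cover exists.
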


Note that we further assume without loss of generality that $M$ is bipartite as otherwise $M\times K_2$ could be taken instead of it.

\begin{proposition}[\!\cite{n:BFJKR24}, Proposition 2 (a)]
\label{prop:multicoverproperties} 
Let $H$ be a connected $k$-regular $k$-edge-colorable graph with no loops or semi-edges. Then the graph $M{:}w$ constructed from the multicover $M$ of $H$ as above satisfies that
for every vertex $u\in V_H$ and every bijection $\sigma_u\colon N'_M(w) \to N'_H(u)$, there exists a partial covering projection of $M{:}w$ onto $H$ that extends $\sigma_u$ and maps each $w_e, e\in N'_M(w)$ to $x$.
\end{proposition}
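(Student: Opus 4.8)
The plan is to obtain this as an essentially immediate consequence of the existence statement in Proposition~\ref{prop:multicoverexistence}, using the single observation that splitting $w$ into degree-one pendant vertices only \emph{relaxes} the local bijectivity condition to the trivially satisfied injectivity at those vertices.

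First I would fix a vertex $u\in V_H$ and a bijection $\sigma_u\colon N'_M(w)\to N'_H(u)$, and apply Proposition~\ref{prop:multicoverexistence} to this $u$ and this $\sigma_u$. This yields a \emph{full} covering projection $f\colon M\to H$ with $f(w)=u$ whose restriction $f|_{N'_M(w)}$ coincides with $\sigma_u$. The idea is then to transport $f$ to $M{:}w$. Recall that $M{:}w$ arises from $M$ by detaching each edge $e$ incident with $w$ from $w$ and reattaching it to a fresh degree-one vertex $w_e$; no link is deleted or created and each edge keeps its name, so there is a canonical identification of the link set of $M{:}w$ with that of $M$, while $V_{M{:}w}=(V_M\setminus\{w\})\cup\{w_e:e\in N'_M(w)\}$. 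I would define $g$ on $M{:}w$ by $g(v)=f(v)$ for $v\in V_M\setminus\{w\}$, by $g(w_e)=u$ for each pendant vertex $w_e$, and by $g(\ell)=f(\ell)$ for each link $\ell$ under the above identification. In particular the edge $e\in N'_M(w)$, now incident with $w_e$, is sent to $\sigma_u(e)$, so $g$ extends $\sigma_u$ and maps each $w_e$ to $u=f(w)$, as required.

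It remains to verify that $g$ is a partial covering projection, i.e.\ that it preserves incidences and is locally injective. Incidence preservation is inherited from $f$ together with the identity $g(w_e)=u=f(w)$: since $\sigma_u(e)\in N'_H(u)$, the vertex $u$ is indeed an endpoint of $g(e)$, and the other endpoint $f(v)$ is correct because $f$ preserves incidences. For the local condition there are two kinds of vertices. At every $v\in V_M\setminus\{w\}$ the link-neighborhood is unchanged, $N'_{M{:}w}(v)=N'_M(v)$, so $g|_{N'_{M{:}w}(v)}=f|_{N'_M(v)}$ is even a bijection onto $N'_H(f(v))$; at each pendant vertex $w_e$ the link-neighborhood is the single edge $e$, mapped to $\sigma_u(e)\in N'_H(u)$, so the restriction is trivially injective. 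Hence $g$ is a partial covering projection with the desired properties. I do not expect a genuine obstacle: the entire content is the observation that degree-one pendant vertices carry no real local constraint, so the only substantive ingredient, the existence and extension property of $f$, is already delivered by Proposition~\ref{prop:multicoverexistence}; the work left is purely the bookkeeping of identifying links of $M$ with those of $M{:}w$ and checking the two local cases above.
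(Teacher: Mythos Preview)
Your argument is correct: the proposition is an immediate consequence of Proposition~\ref{prop:multicoverexistence}, obtained by transporting the covering projection $f$ to $M{:}w$ and noting that the only change is at the new pendant vertices, where the single-link neighborhoods make local injectivity automatic. The paper does not supply its own proof of this statement---it is quoted directly from~\cite{n:BFJKR24}---and your derivation is exactly the natural one (and matches how the cited source obtains it from the multicover existence). One cosmetic point: the ``$x$'' in the statement is a typo for ``$u$''; you correctly read it that way.
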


Finally we make an observation about cycles of length 4 that we frequently use in our arguments. It directly follows from the fact that trees have no cycles.

\begin{observation}\label{obs:C4}
If $f: G\to T\na{k}$ is a covering projection and $G$ contains a $C_4$ as a subgraph, then 
at least one pair of opposite edges of such $C_4$ is mapped onto semi-edges.
\end{observation}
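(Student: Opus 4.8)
The plan is to classify each edge of the $C_4$ according to whether it lifts a semi-edge or a genuine edge of $T\na{k}$, and then to rule out every configuration except the desired one by exploiting that $T$ is acyclic. The first step records the dictionary forced by the covering axioms: since the fiber $f^{-1}(s)$ of a semi-edge $s$ with $\iota(s)=u$ consists of edges and semi-edges spanning $f^{-1}(u)$, an edge $xy$ of $G$ is mapped onto a semi-edge precisely when $f(x)=f(y)$; and since the fiber $f^{-1}(e)$ of a genuine edge $e$ with $\iota(e)=\{u,v\}$ is a matching between $f^{-1}(u)$ and $f^{-1}(v)$, an edge $xy$ is mapped onto a genuine edge precisely when $f(x)$ and $f(y)$ are distinct and adjacent in $T$ (recall $T\na{k}$ has no loops and no multiple edges). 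Call the former edges \emph{flat} and the latter \emph{steep}. I will also use repeatedly that $f$ is locally bijective, hence injective on link-neighborhoods: the two $C_4$-edges meeting at any one of the four cycle vertices must map to two \emph{different} links of $T\na{k}$.

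Write the cycle as $a,b,c,d$ with edges $ab,bc,cd,da$ and opposite pairs $\{ab,cd\}$ and $\{bc,da\}$. Reading the images $f(a),f(b),f(c),f(d)$ around the cycle, a flat edge keeps the image fixed while a steep edge moves it to a neighbor in $T$; deleting the flat steps leaves a closed walk in $T$ whose length equals the number of steep edges. Because $T$ is a tree and therefore bipartite, this closed walk has even length, so the number of steep (and hence of flat) edges is even, i.e.\ the number of flat edges is $0$, $2$, or $4$.

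Next I would eliminate the two bad configurations. If all four edges are steep, local injectivity at $b$ and at $a$ forces $f(a)\neq f(c)$ and $f(b)\neq f(d)$; together with the consecutive images being distinct, this makes $f(a),f(b),f(c),f(d)$ four pairwise distinct vertices forming a genuine $4$-cycle in $T$, contradicting acyclicity. If exactly two edges are flat and they are \emph{adjacent}, say $ab$ and $bc$, then $f(a)=f(b)=f(c)=:u$, while $cd$ and $da$ are steep with common other endpoint $f(d)=:v$; both $cd$ and $da$ then lift the single edge $uv$ of $T$, yet they meet at $d$, violating local injectivity. The only surviving configurations are therefore ``all four flat'' and ``exactly two flat, opposite'', and in both a pair of opposite edges is mapped onto semi-edges, as claimed.

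The routine part is the dictionary and the parity count; the only place needing care is the $4$-cycle argument, where one must combine the consecutive-distinctness of the images with the two local-injectivity inequalities to conclude that all four image vertices are distinct. This is exactly the step where acyclicity of $T$ is invoked, and it is the heart of the statement.
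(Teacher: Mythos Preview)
Your proof is correct and follows essentially the same idea the paper has in mind: the paper does not spell out a proof at all, merely remarking that the observation ``directly follows from the fact that trees have no cycles,'' and you have supplied exactly the detailed case analysis (flat/steep edges, parity, and local injectivity) that makes this rigorous. There is nothing to correct or contrast.
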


In other words, either all vertices of this $C_4$ are mapped to a vertex incident with two semi-edges or it is mapped onto two adjacent vertices, each adjacent with at least one semi-edge. 

\section{Paths}\label{sec:Paths}

We begin our study by examining paths. (Let us recall that $P_n$ stands for a path on $n$ vertices.) The $P_2\na{d}$-{\sc Cover} has been shown to be NP-complete in~\cite{n:BFHJK21-MFCS} for all $d\ge 3$. The $P_1\na{d}$-{\sc Cover} problem is equivalent to the $d$-edge-colorability of $d$-regular graphs, a problem well known to be NP-complete for every $d\ge 3$ for simple input graphs \cite{n:LevenGalil83}. Here we focus on covers of graphs $P_n\na{3}$ and $C_{2n}\na{3}$, depicted in Figure~\ref{fig:paths}.

\begin{figure}
\centering
\includegraphics[scale=1,page=1]{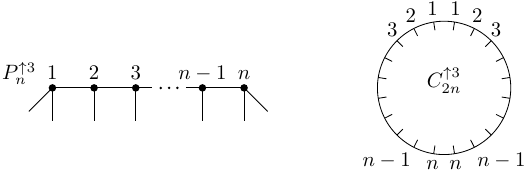}
\caption{Graphs $P_n\na{3}$ and $C_{2n}\na{3}$. The labels on $C_{2n}\na{3}$ indicate a covering projection to $P_n\na{3}$.}
\label{fig:paths}
\end{figure} 

Since semi-edges can be mapped only onto semi-edges (this holds for every covering), we get the following observations.

\begin{observation}\label{obs:fousata_cesta}
If $f:G \to P_n\na{3}$ is a covering projection and $G$ contains as a subgraph a path $Q$ where each vertex of $Q$ is incident with a semi-edge, then the sequence of images under $f$ along $Q$ coincides with a subsequence of the cyclic 
sequence of images given by the covering $C_{2n}\na{3}\to P_n\na{3}$ as depicted in Fig.~\ref{fig:paths}.
\end{observation}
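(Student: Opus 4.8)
The plan is to read off the image of $Q$ one vertex at a time, using local bijectivity together with the fact (recorded in the preliminary observation) that a semi-edge can only be mapped onto a semi-edge. Write $Q = q_1 q_2 \cdots q_m$, with consecutive vertices joined by normal edges and with each $q_i$ carrying a semi-edge $t_i$. Since $G$ covers the $3$-regular graph $P_n\na 3$, every vertex of $G$ has degree $3$; hence for an internal vertex $q_i$ of $Q$ the three links incident with it are exactly the two path edges $q_{i-1}q_i$, $q_iq_{i+1}$ and the single semi-edge $t_i$, and these are mapped bijectively onto the link-neighbourhood of $v := f(q_i)$. First I would also record the structure of the reference covering $C_{2n}\na 3 \to P_n\na 3$ from Figure~\ref{fig:paths}: its cyclic sequence of images is $1,2,\dots,n,n,n-1,\dots,2,1$ of period $2n$, where the two repeated values (one at $n$, one at $1$) occur precisely because at an endpoint of the path a normal edge of $C_{2n}\na 3$ lifts one of the two semi-edges.

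The key step is a local case analysis at each internal $q_i$. Because $t_i$ must be mapped onto a semi-edge, the two path edges are mapped onto the two remaining links at $v$. If $v$ is an internal vertex of $P_n$ (whose links are the edges to $v-1$ and $v+1$ and one semi-edge), then $t_i$ is forced onto that semi-edge and the two path edges are mapped onto $\{v-1,v\}$ and $\{v,v+1\}$; consequently $\{f(q_{i-1}),f(q_{i+1})\} = \{v-1,v+1\}$, and the image walk passes monotonically through $v$. If $v$ is an endpoint of $P_n$, say $v=1$ with links the edge $\{1,2\}$ and two semi-edges, then $t_i$ occupies one semi-edge, and by bijectivity the two path edges are mapped onto the remaining semi-edge and the edge $\{1,2\}$: the path edge hitting $\{1,2\}$ forces the corresponding neighbour to map to $2$, while the path edge lifting a semi-edge joins two vertices of the same fibre and so forces the other neighbour to map to $1$. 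Thus $\{f(q_{i-1}),f(q_{i+1})\} = \{1,2\}$, i.e.\ the image walk \emph{reflects} at $1$ with a repeated value, exactly mirroring the doubled $1$ in the reference sequence (and symmetrically at $n$).

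Finally I would assemble these local constraints. Each of them has the form: given the directed pair $(f(q_{i-1}),f(q_i))$ of consecutive images, the next image $f(q_{i+1})$ is uniquely determined --- continue in the same direction at an internal vertex, turn around with one repeated value at an endpoint. The admissible directed pairs together with these forced transitions form a single directed cycle, which is precisely the cyclic image sequence of $C_{2n}\na 3 \to P_n\na 3$ described above. Hence $f(q_1),\dots,f(q_m)$ follows this cycle step by step and therefore coincides with a contiguous segment of it; since the cyclic sequence is symmetric, the direction in which $Q$ is traversed is immaterial.

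The only delicate point --- and the step I would be most careful about --- is the endpoint case, where a \emph{normal} edge of $G$ is forced to lift a semi-edge of $P_n\na 3$, producing two equal consecutive images. Here I must make sure that exactly one of the two path edges (and not both, nor neither) lifts a semi-edge; this is pinned down by bijectivity once $t_i$ has consumed one of the two endpoint semi-edges, and it is what forces the prescribed reflections and excludes any other behaviour of the image walk.
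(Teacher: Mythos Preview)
Your argument is correct and is exactly the intended one: the paper states this as an observation without a formal proof, deriving it directly from the remark that ``semi-edges can be mapped only onto semi-edges''; you have simply supplied the straightforward local case analysis (internal vertex of $P_n$ versus endpoint of $P_n$) that this sentence is meant to invoke.
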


\begin{figure}
\centering
\includegraphics[scale=1,page=2]{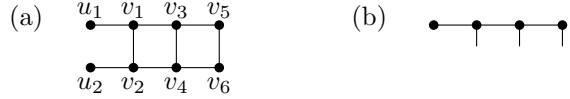}
\caption{Two auxiliary graphs.}
\label{fig:paths-aux}
\end{figure} 

\begin{observation}\label{obs:bezfousuaux}
If $f:G\to P\na{3}_n$ is a covering projection, $n\ge 3$, and $G$ contains the graph depicted in Fig.~\ref{fig:paths-aux} (a) as a subgraph, then $f(v_1)=f(v_2)$, $f(v_3)=f(v_4)$ and $f(v_5)=f(v_6)$. In other words,  
the three edges $v_1v_2$, $v_3v_4$ and $v_5v_6$ are mapped onto semi-edges.
Moreover, $f(u_1)=f(u_2)$.
\end{observation}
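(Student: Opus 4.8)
The plan is to derive everything from iterated applications of Observation~\ref{obs:C4}, together with the explicit semi-edge distribution of $P_n\na{3}$: its two endpoints arise from degree-$1$ vertices of $P_n$ and hence carry two semi-edges each, while every interior vertex carries exactly one. I read the gadget in Fig.~\ref{fig:paths-aux}(a) as a four-rung ladder: a top rail $u_1\,v_1\,v_3\,v_5$ and a bottom rail $u_2\,v_2\,v_4\,v_6$, joined by the rungs $u_1u_2$, $v_1v_2$, $v_3v_4$, $v_5v_6$. It therefore contains exactly three four-cycles, each bounded by two consecutive rungs and the two rail edges between them, and consecutive four-cycles share a rung. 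Since $f$ is a covering projection, each of these $C_4$'s falls under Observation~\ref{obs:C4}: one of its two opposite pairs of edges -- either the two bounding rungs or the two rails -- is mapped onto semi-edges. The goal is to force all four rungs onto semi-edges, which gives $f(v_1)=f(v_2)$, $f(v_3)=f(v_4)$, $f(v_5)=f(v_6)$ and $f(u_1)=f(u_2)$ at once.

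The easy half is a propagation argument. Suppose a single rung is mapped onto a semi-edge, and look at one of the (at most two) four-cycles containing it. By Observation~\ref{obs:C4} one opposite pair of that $C_4$ maps onto semi-edges; it cannot be the rail pair, since then both rungs -- in particular our chosen one -- would map onto a genuine edge of the path, a contradiction. Hence the rung pair maps onto semi-edges, so the \emph{other} rung of that $C_4$ is also a semi-edge image. Sweeping this inference across the three four-cycles of the ladder spreads the ``semi-edge'' property from any one rung to all four, yielding the desired equalities. If convenient, Observation~\ref{obs:fousata_cesta} can then be invoked to identify the common images along the rails with a subwalk of the canonical $C_{2n}\na{3}\to P_n\na{3}$ cover, but this is not needed for the statement.

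The real work, and the main obstacle, is the base case: showing that \emph{at least one} rung is a semi-edge image, and it is here that I would use the hypothesis $n\ge 3$. Suppose for contradiction that no rung is a semi-edge image. Then, by the dichotomy above, every one of the three four-cycles is of the ``rails'' type, so every rail edge maps onto a semi-edge and every rung maps onto a normal edge. Collapsing the rails forces all top-rail vertices to share a single image $a$ and all bottom-rail vertices a single image $b$, with $a$ and $b$ adjacent (as each rung maps onto the edge $ab$). Now consider an interior top-rail vertex such as $v_1$: its three links in the $3$-regular host are the two rails $u_1v_1$, $v_1v_3$ and the rung $v_1v_2$, and by local bijectivity $f$ maps these to three \emph{distinct} links at $a$. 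But both rails map to semi-edges, so $a$ must be incident with at least two distinct semi-edges, i.e. $a$ is an endpoint of the path; symmetrically $b$ is an endpoint. Since the two endpoints of $P_n$ are adjacent only when $n=2$, the assumption $n\ge 3$ makes $a$ and $b$ non-adjacent, a contradiction. Hence some rung is a semi-edge image, and the propagation step of the previous paragraph finishes the proof.
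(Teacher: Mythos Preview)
Your argument has the right overall shape, but it rests on a misreading of the gadget in Fig.~\ref{fig:paths-aux}(a). The gadget is a \emph{three}-rung ladder, not four: the rungs are $v_1v_2$, $v_3v_4$, $v_5v_6$, the rails are $v_1\text{--}v_3\text{--}v_5$ and $v_2\text{--}v_4\text{--}v_6$, and $u_1,u_2$ are merely the remaining neighbours of $v_1$ and $v_2$ in the ambient cubic graph, with \emph{no edge} $u_1u_2$. (Indeed, Fig.~\ref{fig:paths-aux}(a) arises as the $K_2$-product of the gadget in Fig.~\ref{fig:paths-aux}(b), a path $u\,v_1\,v_3\,v_5$ with semi-edges only at the last three vertices; the product produces exactly three ``rung'' edges.) Hence there are only two $C_4$'s, and your mechanism for obtaining $f(u_1)=f(u_2)$ --- that a rung $u_1u_2$ is forced onto a semi-edge --- simply is not available. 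The paper instead derives $f(u_1)=f(u_2)$ by local bijectivity: once $f(v_1)=f(v_2)$ and $f(v_3)=f(v_4)$ are known, the three links at $v_1$ and the three links at $v_2$ map to the \emph{same} three links at $f(v_1)$; matching up the already-determined images ($v_2\leftrightarrow v_1$ and $v_3\leftrightarrow v_4$) forces $f(u_1)=f(u_2)$. Your base-case idea survives the correction, but it must be run at the interior rail vertex $v_3$ (and symmetrically $v_4$), not at $v_1$.

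There is also a smaller gap in your propagation step. Observation~\ref{obs:C4} says that \emph{at least} one opposite pair maps to semi-edges; it does not say the other pair maps to ordinary edges. So from ``one rung is a semi-edge'' together with ``the rail pair is a semi-edge pair'' you cannot deduce a contradiction as written. The fix is easy: if the rails of a $C_4$ are semi-edges and one rung is also a semi-edge, then all four vertices share a common image, whence the remaining rung (both endpoints in the same fibre, no loops in $P_n\na{3}$) also lands on a semi-edge. With this adjustment and the corrected reading of the figure, your approach matches the paper's.
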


\begin{proof}
By Observation~\ref{obs:C4}, in each 4-cycle at least one pair of opposite edges is mapped onto semi-edges. This cannot be the pair containing the edge $v_1v_3$, as then $v_3v_5$ would be also mapped onto a semi-edge. This would consequently yield $n=2$, contrary to our assumptions.

Since $f(v_1)=f(v_2)$ and two neighbors of $v_1$, namely $v_2$ and $v_3$, have the same images as the two neighbors of $v_2$, namely $v_1$ and $v_4$, the remaining neighbors $u_1$ of $v_1$ and $u_2$ of $v_2$ are mapped on the same target to guarantee that $f$ is locally bijective between $N'(v_1)$ and $N'(f(v_1))$ and hence also between 
$N'(v_2)$ and $N'(f(v_2))=N'(f(v_1))$.
\end{proof}

\begin{lemma}\label{lem:bezfousu}
If $n\ge 3$ and $G$ is a graph of girth at least 5 such that each vertex of $G$ belongs to a subgraph isomorphic to the graph depicted in Fig.~\ref{fig:paths-aux}(b),
then $G$ covers $P\na{3}_n$ if and only if $G\times K_2$ covers $P\na{3}_n$. Furthermore, if all vertices of $G$ are semi-simple, $G\times K_2$ is simple and bipartite.
\end{lemma}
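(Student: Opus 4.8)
The plan is to prove the two implications of the biconditional separately; the forward one is easy and the backward one carries all the weight. For the implication ``$G$ covers $P_n\na{3}\Rightarrow G\times K_2$ covers $P_n\na{3}$'', I would first observe that for \emph{any} graph the natural projection $p\colon G\times K_2\to G$ sending $u',u''\mapsto u$ is a two-fold covering projection: a normal edge $uv$ of $G$ lifts to the perfect matching $\{u'v'',u''v'\}$, and a semi-edge at $u$ lifts to the single edge $u'u''$, which is a matching spanning $\{u',u''\}$. Since the composition of two covering projections is again a covering projection, any covering $f\colon G\to P_n\na{3}$ composes with $p$ to give the covering $f\circ p\colon G\times K_2\to P_n\na{3}$. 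This direction uses none of the structural hypotheses.

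For the reverse implication I would fix a covering projection $g\colon G\times K_2\to P_n\na{3}$ and reduce everything to a single \emph{symmetry claim}: $g(u')=g(u'')$ for every vertex $u\in V_G$. Granting this, I would define $f\colon G\to P_n\na{3}$ on vertices by $f(u):=g(u')=g(u'')$. On links I would set, for a normal edge $uv$ of $G$, the value $f(uv)$ to be the common image of its two lifts $u'v''$ and $u''v'$ (which agree by the symmetry claim, giving either a normal edge between $f(u)$ and $f(v)$, or a semi-edge at $f(u)$ when $f(u)=f(v)$), and for a semi-edge at $u$ the image of its lift $u'u''$, which the symmetry claim makes a semi-edge at $f(u)$. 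Local bijectivity of $f$ at $u$ then follows immediately from that of $g$ at $u'$, since the construction identifies $N'_G(u)$ with $N'_{G\times K_2}(u')$ in an incidence-preserving manner and $g$ maps $N'_{G\times K_2}(u')$ bijectively onto $N'_{P_n\na{3}}(g(u'))=N'_{P_n\na{3}}(f(u))$.

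The main obstacle is the symmetry claim, and this is exactly where girth at least $5$ and the gadget of Fig.~\ref{fig:paths-aux}(b) enter. First I would use the girth hypothesis to control the $4$-cycles of $G\times K_2$: such a cycle projects under $p$ to a closed walk of length $4$ in $G$, and since $G$ has no cycle of length at most $4$ (in particular no multiple edges and no triangles), every $4$-cycle of $G\times K_2$ must traverse semi-edges of $G$ --- concretely it arises either from a vertex carrying two semi-edges or from an edge $xy$ whose both endpoints carry a semi-edge, the latter producing the cycle $x'y''y'x''$ whose two edges over the semi-edges are $x'x''$ and $y'y''$. Since every vertex of $G$ lies in a copy of the gadget of Fig.~\ref{fig:paths-aux}(b), each vertex $u$ is incident with a semi-edge, so its lift $u'u''$ is an edge lying on such a $4$-cycle, and Observation~\ref{obs:C4} leaves two possibilities: the ``untwisted'' case, in which all four vertices map to one endpoint of the path and $g(u')=g(u'')$ follows at once, and the ``twisted'' case, in which the edges over the semi-edges map to real edges while the pair over $xy$ maps to semi-edges, yielding only $g(x')=g(y'')$. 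The crux is to exclude the twisted case; here I would exploit the additional ladder structure of Fig.~\ref{fig:paths-aux}(b), arguing in the spirit of Observation~\ref{obs:bezfousuaux} that a twist forces genuine (non-semi-edge) edges of $G$ to be mapped onto semi-edges along the gadget and then propagating this constraint, so that the resulting identifications create a short cycle in $G$ forbidden by girth at least $5$. This rules out the twisted case everywhere and establishes $g(u')=g(u'')$ for all $u$.

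Finally, for the ``furthermore'' part I would verify directly that when every vertex of $G$ is semi-simple, $G\times K_2$ is simple and bipartite. Bipartiteness is built into the construction: the classes $\{u':u\in V_G\}$ and $\{u'':u\in V_G\}$ are both stable, as every link of $G\times K_2$ joins a primed to a double-primed vertex, including the edges $u'u''$ created from semi-edges. For simplicity, $G$ has no loops, so $G\times K_2$ has none; a parallel edge of $G\times K_2$ between $u'$ and $v''$ would require a multiple edge of $G$ or a loop, both excluded; the at-most-one-semi-edge condition prevents a double edge $u'u''$; and $G\times K_2$ has no semi-edges at all, since each semi-edge of $G$ is turned into an ordinary edge. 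Hence $G\times K_2$ is simple and bipartite, as claimed.
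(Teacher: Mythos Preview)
Your overall architecture matches the paper's: the forward direction by composing with the projection $G\times K_2\to G$, the backward direction via the \emph{symmetry claim} $g(u')=g(u'')$ and then pushing $g$ down to $G$. The ``furthermore'' part is also handled correctly.

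Where your argument diverges is in establishing the symmetry claim, and there it contains a genuine gap. The paper does not classify the $4$-cycles of $G\times K_2$ at all; it simply observes that the subgraph of Fig.~\ref{fig:paths-aux}(b) sitting at any vertex of $G$ lifts, under $\times K_2$, to a copy of the graph of Fig.~\ref{fig:paths-aux}(a) inside $G\times K_2$, and then invokes Observation~\ref{obs:bezfousuaux} verbatim. That observation already contains the entire ``twisted vs.\ untwisted'' analysis and yields $g(v_1)=g(v_2)$, $g(v_3)=g(v_4)$, $g(v_5)=g(v_6)$ (hence $g(x')=g(x'')$ for every vertex $x$ of the three-vertex gadget~(b)) in one stroke. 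Your detour through Observation~\ref{obs:C4} on a single $4$-cycle, followed by a separate twist-exclusion, is unnecessary.

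More importantly, your twist-exclusion is wrong as stated. You claim that a twist ``propagates'' to ``create a short cycle in $G$ forbidden by girth at least~$5$''. It does not. In the twisted configuration the lifts of the normal edges $ab$ and $bc$ of the gadget map to semi-edges, giving equalities like $g(a')=g(b'')$, $g(b')=g(c'')$, etc.; these are constraints on the \emph{images} in $P_n\na{3}$ and say nothing about adjacencies in $G$, so no cycle in $G$ is produced. The actual contradiction, which is exactly the content of the proof of Observation~\ref{obs:bezfousuaux}, is with the hypothesis $n\ge 3$: if both $4$-cycles of~(a) are twisted, then the middle vertex $v_3$ (and symmetrically $v_4$) has two incident edges mapped to semi-edges, so $g(v_3)$ and $g(v_4)$ are each endpoints of $P_n$; since $v_3v_4$ maps to a normal edge joining them, this forces $n=2$. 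Replace your girth-based sentence by this $n\ge 3$ argument (or, better, just cite Observation~\ref{obs:bezfousuaux} directly as the paper does) and the proof goes through. Your use of girth~$\ge 5$ to classify $4$-cycles of $G\times K_2$ is correct but superfluous for the paper's route; indeed the paper's proof never invokes the girth hypothesis at all.
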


\begin{proof}
The forward implication follows from using the same target on both copies of every vertex of $G$.

For the reverse implication, by our assumptions, every vertex of $G\times K_2$ belongs to a subgraph isomorphic to the graph depicted in Fig.~\ref{fig:paths-aux} (a) as the product of $K_2$ with the graph depicted in Fig.~\ref{fig:paths-aux} (b). Finally, we apply Observation~\ref{obs:bezfousuaux} to conclude the proof.
\end{proof}

\begin{figure}
\centering
\includegraphics[page=3]{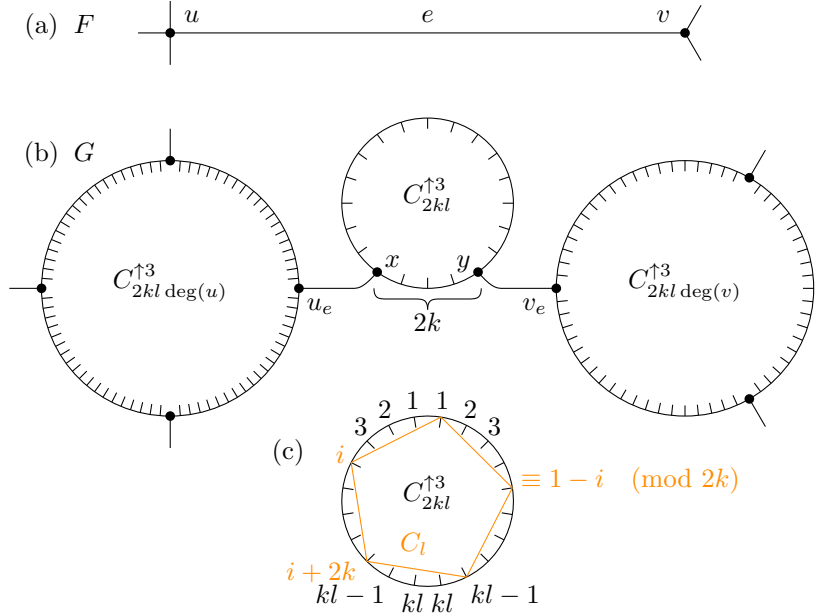}
\caption{Reduction from $C_l$-{\sc Hom} to $P\na{3}_{kl}$-{\sc Cover} for an odd $l$. In this particular case, $k=2$ and $l=5$.}
\label{fig:paths-homreduction}
\end{figure} 

\begin{lemma}\label{lem:oddpath}
The problem $P\na{3}_n$-{\sc Cover} is NP-complete for simple bipartite input graphs whenever $n$ is divisible by an odd $l\ge 3$. 
\end{lemma}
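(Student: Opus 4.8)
The plan is to reduce from the well-known NP-complete problem $C_l$-\textsc{Hom} (homomorphism to an odd cycle $C_l$, equivalently proper $\chi$-type colorability testing), as the caption of Figure~\ref{fig:paths-homreduction} already signals for $n = kl$. Given an input graph $F$ for $C_l$-\textsc{Hom}, I would build a graph $G$ whose covering projections to $P_n\na{3}$ correspond exactly to homomorphisms $F \to C_l$. The idea is that the covering $C_{2n}\na{3} \to P_n\na{3}$ from Figure~\ref{fig:paths} ``wraps'' a cyclic sequence of length $2n = 2kl$ around the path, so a closed walk of the right length and parity in $G$ is forced to map to a cyclic residue pattern that behaves like a coloring by $\mathbb{Z}_l$ (or by $C_l$). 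Concretely, I would replace each vertex of $F$ by a gadget that pins down a ``color'' via the forced image-sequence behavior of Observation~\ref{obs:fousata_cesta}, and replace each edge of $F$ by a path-gadget of length a multiple of $k$ whose endpoints are constrained to differ by exactly one step in the cyclic $C_{2n}\na{3}$-sequence, thereby encoding adjacency in $C_l$.

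The key steps, in order, would be as follows. First, I would fix the explicit covering projection $C_{2n}\na{3} \to P_n\na{3}$ and record its cyclic image-sequence; by Observation~\ref{obs:fousata_cesta}, any path in $G$ all of whose vertices carry a semi-edge must map to a contiguous subsequence of this fixed cyclic sequence, which is exactly the rigidity I want. Second, I would design an edge-gadget: a subcubic path-like piece, each vertex equipped with a semi-edge (so that it is semi-simple), whose length is chosen so that traversing it advances the position in the cyclic sequence by exactly $2k$ steps, i.e.\ by one ``color step'' modulo $l$ after dividing out the $k$-fold blow-up. Third, I would assemble these gadgets according to the structure of $F$, and argue both directions: a homomorphism $F \to C_l$ yields consistent color assignments that glue the gadget images into a global covering projection, and conversely any covering projection forces each vertex-gadget into one of $l$ residue classes with adjacent gadgets in consecutive classes, giving back a homomorphism to $C_l$. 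Fourth, to obtain hardness for \emph{simple bipartite} inputs as stated, I would invoke Lemma~\ref{lem:bezfousu}: I would ensure $G$ has girth at least $5$, has all vertices semi-simple, and has every vertex lying in a copy of the gadget of Figure~\ref{fig:paths-aux}(b), so that $G$ covers $P_n\na{3}$ iff $G \times K_2$ does, and $G \times K_2$ is simple and bipartite. Applying the reduction to $G \times K_2$ then delivers the claimed restriction.

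The main obstacle I anticipate is getting the \emph{arithmetic of the lengths} exactly right so that the divisibility $l \mid n$ is used correctly: the $2n = 2kl$ positions of the cyclic sequence must factor as $l$ color classes each of width $2k$, and the edge-gadget must be calibrated so that adjacency in $F$ translates precisely to a one-class shift modulo $l$. I would need to verify that the reflection/turnaround behavior of the path $P_n\na{3}$ at its two semi-edge-bearing endpoints (where the cyclic sequence reverses direction) does not create spurious or inconsistent color identifications — in particular that the parity coming from the odd $l$ is what makes the cycle $C_l$ rather than a path the right target, which is presumably why $l$ must be odd. A secondary, more bookkeeping-heavy obstacle is simultaneously satisfying all three hypotheses of Lemma~\ref{lem:bezfousu} (girth $\ge 5$, semi-simplicity, and the Fig.~\ref{fig:paths-aux}(b)-covering property) while keeping the gadget lengths on their prescribed residues; I expect this to require padding the gadgets with short rigid segments, which is routine once the length calculus is pinned down but is where most of the careful case-checking will live.
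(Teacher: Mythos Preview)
Your plan matches the paper's proof essentially step for step: reduction from $C_l$-\textsc{Hom} with $n=kl$, vertex and edge gadgets built from semi-edge-laden cycles whose lengths are calibrated so that Observation~\ref{obs:fousata_cesta} forces images into the cyclic $C_{2n}\na{3}\to P_n\na{3}$ pattern, edge gadgets advancing the position by $2k$ to encode one step in $C_l$, and a final appeal to Lemma~\ref{lem:bezfousu} (with a small length adjustment in the degenerate case) to pass to simple bipartite inputs via $G\times K_2$. The obstacles you flag---the turnaround arithmetic at the path's ends and the padding needed for the girth/Fig.~\ref{fig:paths-aux}(b) hypotheses---are exactly the places where the paper's argument spends its care.
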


\begin{proof}
Let $n=kl$.
We reduce from $C_l$-{\sc Hom} which is NP-complete by Hell and Nešetřil's graph homomorphism dichotomy~\cite{b:HN90}. Let $F$ be a graph for which the existence of a homomorphism to $C_l$ is questioned. 

We construct a graph $G$ such that $G$ covers $P\na{3}_n$ if and only if $F$ allows a homomorphism to $C_l$. The elements of the construction are depicted in Fig.~\ref{fig:paths-homreduction} (b). For each vertex $u\in V_F$, we first insert a copy of $C\na{3}_{2kl\deg(u)}$. For each edge $e$ incident with $u$, we choose a distinct vertex of the cycle $C\na{3}_{2kl\deg(u)}$ and label it $u_e$, so that the chosen vertices are distributed around the cycle at distance $2kl$.

For each edge $e=uv\in E_F$, we insert a copy of $C\na{3}_{2kl}$, select two of its vertices $x$, and $y$ at distance $2k$ 
and merge the semi-edges stemming from $u_e$ and $x$ to an edge. We do this analogously for $v_e$ and $y$. This will be the edge gadget connecting the cycles representing $u$ and $v$.

If $f:G\to P\na{3}_{kl}$ is a covering projection, then by Observation~\ref{obs:fousata_cesta} we get that for every vertex $u\in V_F$, it holds that all vertices $u_e$ where $e\ni u$ are mapped by $f$ on the same vertex $i$ of $P\na{3}_{kl}$. Also 
their neighbors in the edge gadgets (denoted by $x$) are mapped onto $i$ as well. 
Note that when this pattern along $C\na{3}_{2kl\deg(u)}$ would be broken by mapping $f(u_ex)=\{i,i\pm 1\}$,
we fail to extend $f$ on the attached $C\na{3}_{2kl}$, because the neighbor of $x$ mapped onto $i\pm1$ cannot map its semi-edge as it was already used.
Such $i$ hence uniformly represents the image of $u$ under the desired homomorphism $F\to C_l$. 

Moreover, the vertices $x$ and $y$ in each edge gadget are at distance $2k$, so their images are labels of two vertices at distance $2k$ of $C\na{3}_{2kl}$. In particular, if $f(x)=i$, then:
$$f(y)\in
\begin{cases}
\{i+2k, i-2k\} & \text{ for } i\in[2k+1,kl-2k],\\
\{i+2k, 2k+1-i\} & \text{ for } i\in[1,2k],\\
\{2kl-2k+1-i, i-2k\} & \text{ for } i\in[kl-2k+1,kl].
\end{cases}
$$

Therefore, if any vertex $u_e$ satisfies $f(u_e)\equiv i\pmod{2k}$ with $i\in[1,2k]$, then for all other vertices $v_{e'}$, where $v\in V_E$, $e'\in E_F$ and $v\in e'$ holds that
$f(v_{e'})\in W_i$, where $W_i =\{j: j\equiv i \pmod{2k}\} \cup \{j\equiv 1-i \pmod{2k}\}$.

We construct the cycle $C_l$ on the vertex set $W_i$ where adjacent vertices have difference $2k$ or they form edges $\{i,2k+1-i\}$ and $\{kl-2k+i,kl+1-i\}$. See the orange cycle on Fig.~\ref{fig:paths-homreduction} (c) for an example.
Then the desired homomorphism $g\colon F\to C_l$ can be obtained by choosing $g(u)=f(u_e)$ for any choice $e\ni u$.

In opposite direction, given a homomorphism $g: F\to C_l$, we assume without loss of generality that 
$C_l$ is constructed on the set $W_i$ as discussed in the previous paragraph. Then we construct a covering 
$f\colon G\to  P\na{3}_{kl}$ first by letting $f(u_e)=g(u)$ for all $u\in V_F$ and $e\ni u$ and then extend this partial mapping to all the remaining vertices of vertex and edge gadgets.

Finally, when $k\ge 2$ we invoke Lemma~\ref{lem:bezfousu} to show that $G\times K_2$ is the desired simple bipartite graph which covers $P\na{3}_{kl}$ if and only if $F$ has a homomorphism to $C_l$.
To be able to apply Lemma~\ref{lem:bezfousu} also for the case $k=1$, we adjust the construction of 
$G$ so that in each edge gadget, we use $C\na{3}_{4l}$ instead of $C\na{3}_{2l}$. This is to ensure that along this cycle, $x$ and $y$ are connected by two paths of length $2l+2$ and $2l-2$. This does not alter the arguments of the hardness reduction, but shows that also vertices between $x$ and $y$ belong to subgraphs isomorphic to the graphs depicted in Fig.~\ref{fig:paths-aux} (b).
\end{proof}

\begin{figure}
\centering
\includegraphics[page=4]{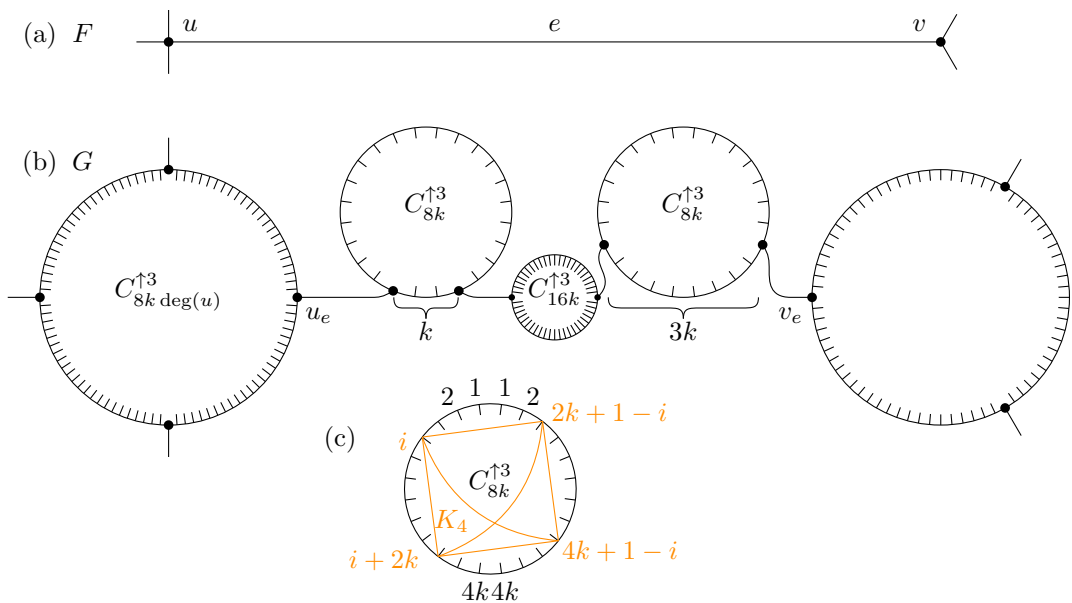}
\caption{Reduction from 4-{\sc Coloring} to $P\na{3}_{4k}$-{\sc Cover}, here $k=3$.}
\label{fig:paths-4colreduction}
\end{figure} 

\begin{lemma}\label{lem:4path}
The problem $P\na{3}_n$-{\sc Cover} is NP-complete for simple bipartite inputs whenever $n$ is divisible by 4. 
\end{lemma}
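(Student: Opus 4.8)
The plan is to mirror the reduction of Lemma~\ref{lem:oddpath}, but to replace the source problem $C_l$-\textsc{Hom} by \textsc{4-Coloring} (equivalently $K_4$-\textsc{Hom}), which is NP-complete. Given an instance $F$, I would again represent each vertex $u\in V_F$ by a long cycle $C\na{3}_{\cdots}$ carrying one representative $u_e$ per incident edge $e$, spaced so far apart that Observation~\ref{obs:fousata_cesta} forces all the $u_e$ to receive the same image; this image, read modulo the folding pattern of Figure~\ref{fig:paths}, is the colour assigned to $u$. For each edge $e=uv$ I would attach an edge gadget joining $u_e$ to $v_e$ whose sole role is to force the two colours to be \emph{adjacent in $K_4$}, i.e.\ simply distinct. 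The construction is illustrated in Figure~\ref{fig:paths-4colreduction}.

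The heart of the matter, and the place where divisibility by $4$ is exploited, is the design of this edge gadget. In Lemma~\ref{lem:oddpath} a single distance $2k$ between the two attachment points produced, through the reflection at the two ends of the path, a \emph{cycle} $C_l$ on the set of admissible colours; for an even target the same trick yields only an even cycle (in fact a disjoint union of $4$-cycles when $n=4k$), and since $C_4$-\textsc{Hom} is polynomial it cannot encode $4$-colouring. Hence the essential new ingredient is to enrich the gadget so that the palette graph on the four colour classes becomes the complete graph $K_4$ rather than a cycle. I would exploit that $K_4$ decomposes into three perfect matchings and that, for $n=4k$, the distance relation together with the two end reflections already realises some of these matchings; the remaining adjacencies must be supplied by combining a second distance (equivalently a second connection between $u_e$ and $v_e$), so that every pair of \emph{distinct} colours, and no equal pair, becomes simultaneously realisable by a covering of the gadget. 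Verifying that this palette graph is exactly $K_4$ — every distinct pair admissible and equal colours excluded — while the vertex cycles still pin each $u$ to a single colour, is the main obstacle; the underlying folding analysis stays entirely local and follows the template of the proof of Lemma~\ref{lem:oddpath} together with Observation~\ref{obs:C4}.

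With the gadget in hand, correctness is routine in both directions. If $f\colon G\to P\na{3}_n$ is a covering projection, Observation~\ref{obs:fousata_cesta} makes the image along each vertex cycle constant, so $g(u):=f(u_e)$ is well defined, and the edge gadget forces $g(u)\neq g(v)$ for every $uv\in E_F$; thus $g$ is a proper $4$-colouring. Conversely, from a proper $4$-colouring $g$ I would set $f(u_e)=g(u)$ and extend $f$ over the vertex and edge gadgets using the covering of Figure~\ref{fig:paths} together with the flexibility guaranteed by the gadget design, exactly as in the previous lemma.

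Finally, to obtain a \emph{simple bipartite} input I would pass to $G\times K_2$ via Lemma~\ref{lem:bezfousu}. As there, this requires that $G$ have girth at least $5$, that every vertex lie in a copy of the subgraph of Figure~\ref{fig:paths-aux}(b), and that all vertices be semi-simple; I would secure these by choosing the cycle lengths in the gadgets large enough, lengthening the connecting cycles when $k$ is small just as $C\na{3}_{4l}$ replaced $C\na{3}_{2l}$ in the proof of Lemma~\ref{lem:oddpath}. Then $G\times K_2$ is the desired simple bipartite graph covering $P\na{3}_n$ if and only if $F$ is $4$-colourable.
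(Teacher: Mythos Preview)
Your plan is correct and matches the paper's approach almost exactly: the paper also reduces from \textsc{4-Coloring}, keeps the vertex gadgets unchanged, and replaces the single edge-cycle by a chain of short cycles so that the palette graph on the four colour classes becomes $K_4$; it then invokes Lemma~\ref{lem:bezfousu} (with the same ``lengthen the small cases'' fix) to obtain simple bipartite inputs. The only details you leave open that the paper pins down are the concrete distances---two copies of $C^{\uparrow 3}_{8k}$ in series realise the combination $\pm k\pm 3k$, which yields exactly the four-element palette $\{i,\,i+2k,\,2k+1-i,\,4k+1-i\}$---and an auxiliary $C^{\uparrow 3}_{16k}$ attached at antipodal points, inserted solely to fix the parity so that the resulting graph is bipartite.
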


\begin{proof}[Proof idea]
We use the same design of the hardness reduction. Assume $n=4k$. The vertex gadgets are obtained by the identical construction as in the previous proof.

The edge gadget now consists of two copies of $C\na{3}_{8k}$ and one of $C\na{3}_{16k}$, see Fig.~\ref{fig:paths-4colreduction}. 
The copy of $C\na{3}_{16k}$ connected by antipodal vertices is involved only to 
guarantee that the resulting graph is bipartite.

The case analysis is similar and reused from~\cite[Theorem 6]{n:Fiala01}. The combination $\pm k\pm 3k$ allows us to label $u_e$ and $v_e$ by any combination of distinct labels from the set $\{i, i+2k, 2k+1-i, 4k+1-i\}$.
This set has always four elements as $i$ and $i+2k$ have opposite parity than $2k+1-i$ and $4k+1-i$.

Also, when $k=2$, then the first cycle $C\na{3}_{16}$ shall be prolonged to $C\na{3}_{32}$ to guarantee assumptions of  Lemma~\ref{lem:bezfousu}. (Observe that such adjustment is not necessary for $k=1$.)
\end{proof}

\begin{theorem}\label{thm:paths}
 For every $n\ge 1$, the problem $P\na{3}_n$-{\sc Cover} is NP-complete for simple input graphs. For $n\ge 2$, it is NP-complete even for simple bipartite input graphs.   
\end{theorem}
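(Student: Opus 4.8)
The plan is to assemble \Cref{thm:paths} from the residue-specific lemmas already proved, dealing with the two boundary values $n\in\{1,2\}$ by hand. First note the two claims are linked: a simple bipartite graph is in particular simple, so it suffices to prove the bipartite statement for every $n\ge 2$ and to treat $n=1$ separately for the weaker (merely simple) statement. Membership of each $P\na{3}_n$-{\sc Cover} in NP is routine, since a guessed vertex-and-link map can be verified for local bijectivity in polynomial time, so throughout I would argue only NP-hardness.

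The combinatorial core is an elementary residue dichotomy: \emph{every integer $n\ge 3$ is divisible by $4$ or admits an odd divisor $l\ge 3$}. Indeed, an integer with no odd divisor $\ge 3$ is a power of $2$, and the only powers of $2$ not divisible by $4$ are $1$ and $2$; hence the set of $n\ge 1$ escaping both conditions is exactly $\{1,2\}$. Consequently, for every $n\ge 3$ I can invoke Lemma~\ref{lem:4path} when $4\mid n$ and Lemma~\ref{lem:oddpath} when some odd $l\ge 3$ divides $n$. Each of these lemmas already delivers NP-hardness of $P\na{3}_n$-{\sc Cover} for simple bipartite inputs, which is precisely the bipartite claim for all $n\ge 3$.

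It remains to dispose of the two excluded values. For $n=1$, the problem $P\na{3}_1$-{\sc Cover} coincides with deciding $3$-edge-colorability of cubic graphs, NP-complete for simple inputs~\cite{n:LevenGalil83}; this settles the simple claim at $n=1$ (the bipartite claim is not asserted there, and cannot hold, as bipartite cubic graphs are always $3$-edge-colorable). For $n=2$, NP-hardness of $P\na{3}_2$-{\sc Cover} for simple inputs is known~\cite{n:BFHJK21-MFCS}, which together with the previous paragraph completes the simple claim for every $n\ge 1$.

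The one genuinely delicate point, and the step I expect to be the main obstacle, is upgrading $n=2$ to simple bipartite inputs: this value escapes both Lemma~\ref{lem:oddpath} and Lemma~\ref{lem:4path}, and the standard doubling device cannot simply be reused, since the gadget underlying Lemma~\ref{lem:bezfousu} rests on Observation~\ref{obs:bezfousuaux}, whose proof explicitly excludes $n=2$. I would therefore either (i) inspect the reduction of~\cite{n:BFHJK21-MFCS} and verify that it already outputs bipartite cubic graphs—plausible, but requiring care, because a cover of $P\na{3}_2$ is two fibers carrying $2$-regular graphs joined by a perfect matching, and those $2$-regular parts may \emph{a priori} contain odd cycles—or (ii) design an $n=2$-specific doubling: replace the reduction output $G$ by the bipartite double cover $G\times K_2$, which is simple and bipartite, and prove the equivalence that $G$ covers $P\na{3}_2$ if and only if $G\times K_2$ covers $P\na{3}_2$. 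The forward direction is immediate by using the same target on both copies, whereas the reverse direction must be forced by a local gadget tailored to $n=2$; note that since $P\na{3}_2$ is not bipartite, Proposition~\ref{prop:K2cover} makes covering $P\na{3}_2$ by a bipartite graph equivalent to covering $P\na{3}_2\times K_2$, so the conceptual bridge is in place. Securing this reverse implication, with a gadget replacing the one that Observation~\ref{obs:bezfousuaux} supplies only for $n\ge 3$, is exactly the crux of the $n=2$ case.
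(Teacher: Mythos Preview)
Your decomposition for $n\ge 3$ (odd divisor $\ge 3$ versus power of $2$, invoking Lemmas~\ref{lem:oddpath} and~\ref{lem:4path}) and your treatment of $n=1$ via $3$-edge-colorability are exactly what the paper does. The only discrepancy is at $n=2$: you flag the bipartite upgrade as ``the main obstacle'' and leave it unresolved, but in fact no new work is needed. The reduction in~\cite{n:BFHJK21-MFCS} already establishes NP-completeness of $P\na{3}_2$-{\sc Cover} for simple \emph{bipartite} inputs, and the paper's proof simply cites that result. So your proposal is correct in structure and matches the paper's approach; it is merely incomplete because you did not check that the cited source already delivers the bipartite statement at $n=2$, and instead sketched an ad hoc doubling argument that is unnecessary.
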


\begin{proof}
For $n=1$, the problem $P\na{3}_1$-{\sc Cover} is equivalent to 3-edge-colorability of cubic graphs, a well known NP-complete problem. Note, however, that this problem is polynomial-time solvable for bipartite graphs.

For $n=2$, the problem $P\na{3}_2$-{\sc Cover} is NP-complete for simple bipartite graphs, as shown in~\cite{n:BFHJK21-MFCS}.

If $n\ge 3$ is divisible by an odd number $l \ge 3$, the problem $P\na{3}_n$-{\sc Cover} is NP-complete for simple bipartite graphs by Lemma~\ref{lem:oddpath}. And finally, if $n\ge 4$ is a power of 2, $n$ is divisible by 4 and the problem $P\na{3}_n$-{\sc Cover} is NP-complete for simple bipartite graphs by Lemma~\ref{lem:4path}.
\end{proof}

\section{Trees of maximum degree 3}\label{sec:deg3}

In this section we describe an NP-hardness reduction from {\sc $3$-Edge-Coloring}. The reduction is heavily based on the reduction from~\cite{n:BFJKR24}. There a more general reduction from {\sc $k$-Edge-Coloring} is used to show that {\sc List-$H$-Cover} is NP-complete
whenever $H$ is a $k$-regular graph which contains at least one semi-simple vertex, and $k\ge 3$. (We recall that the input of {\sc List-$H$-Cover} is a graph together with lists of admissible targets for every vertex and link.) In fact, the reduction is such that the inputs constructed are simple graphs and the lists are of a very special format --- the lists of edges are general (i.e., equal to $\Lambda_H$) and the lists of vertices are either general (i.e., equal to $V_H$) or single-element (and these are all equal to the chosen semi-simple vertex of $H$). For the special case of 3-regular trees, the reduction goes in almost the same way, only a single vertex gadget will be altered to avoid the usage of lists. In the approach used in \cite{n:BFJKR22-IWOCA}, the lists were used to guarantee that certain vertices are always mapped onto a chosen semi-simple vertex. We show that under certain assumptions we do not need such a strong tool as lists, but we can guarantee that a certain vertex must map onto some semi-simple one. The claim is relaxed in the sense that we do not know which semi-simple vertex it will be, but nevertheless, the NP-completeness follows. 

We use an auxiliary bipartite graph $Q=Q(T,4)$ constructed as follows:

\begin{figure}
\centering
\includegraphics[scale=1,page=1]{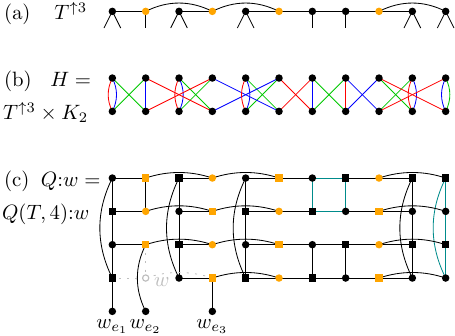}
\caption{(a) Example of a cubic tree $T\na{3}$, orange vertices are relevant; (b) The bipartite cubic multigraph $H$ covering $T\na{3}$ and its 3-edge coloring; (c) the graph $Q{:}w$, vertex shapes indicate classes of bi-partition, two types of cycles $C_4$ on the copies of irrelevant vertices are emphasized in cyan.}
\label{fig:equal3:q}
\end{figure} 

Take four copies of $T$ and denote the vertices in the $i$-th copy by $u_i$, for $u\in V_T$. For every $u\in V_T$ such that $\deg_T u=1$, form a $C_4$ by adding edges $u_iu_{(i \bmod 4)+1}$. For every $u\in V_T$ such that $\deg_T u=2$, add edges of a perfect matching of form $u_iu_{i+2}$, see Fig.~\ref{fig:equal3:q} (c).

We introduce the following notation: We call a vertex $u\in V_T$ \emph{relevant} (for the hardness reduction) if $\deg_T u=3$, or if $\deg_T u=2$ and it has a neighbor $v$ in $T$ such that $\deg_T v\ge 2$. The vertices which are not relevant are called {\em irrelevant}. Irrelevant vertices were already described in Observation~\ref{obs:C4} as possible images of vertices on a $C_4$.
Note that every relevant vertex is semi-simple, but the converse does not need to be true. In $Q$ as well as in $T \times K_2$, a vertex $u_i$ is called (ir)relevant if $u$ is (ir)relevant in $T$.

\begin{lemma}\label{lem:relevant}
The graph $Q$ covers $T\na{3}$ and this cover is $4$-fold. Moreover, in every covering projection $f:Q\rightarrow T\na{3}$, the image $f(u_i)$ of every relevant vertex $u_i$ is semi-simple.
\end{lemma}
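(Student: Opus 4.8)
My plan is to treat the two assertions separately, proving the $4$-fold cover by hand and then analysing the fibers of the leaves.

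For the first assertion I would use the natural projection $\pi\colon u_i\mapsto u$. Each tree edge $uv$ lifts to the perfect matching $\{u_iv_i:i\in[4]\}$; for a degree-$2$ vertex $u$ the single semi-edge lifts to the matching $\{u_1u_3,u_2u_4\}$; and for a leaf $u$ the two semi-edges lift to the two perfect matchings $\{u_1u_2,u_3u_4\}$ and $\{u_2u_3,u_4u_1\}$ of the added $C_4$. This is a covering projection with all fibers of size $4$. For the second assertion I would first record two simplifications. In a cubic tree the only non-semi-simple vertices of $T\na{3}$ are the leaves (they carry two semi-edges), and the only irrelevant vertices are again the leaves, since a degree-$2$ vertex with two leaf-neighbours would force $T=P_3$, contradicting $\Delta(T)=3$. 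Hence it suffices to show that no copy $u_i$ of a non-leaf $u$ maps to a leaf.

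The engine is that $f^{-1}(\ell)$ is a $C_4$ for every leaf $\ell$: as $Q$ is simple and $\ell$ carries two semi-edges, each semi-edge lifts to a perfect matching on the $4$-element fiber, so $f^{-1}(\ell)$ induces a $2$-regular simple graph on four vertices, i.e.\ a $C_4$. I would then classify all $4$-cycles of $Q$ by projecting them through $\pi$: tree edges preserve the copy index while the added edges stay at one vertex, so a $C_4$ projects to a length-$4$ closed walk in $T$ in which the tree steps must cancel (as $T$ is acyclic). This leaves exactly two families: the leaf $C_4$'s $\{u_1,u_2,u_3,u_4\}$ with $u$ a leaf, and the \emph{type-(ii)} squares $\{u_1,u_3,v_1,v_3\}$ (and their even analogues $\{u_2,u_4,v_2,v_4\}$) for an edge $uv$ with $\deg_T u=\deg_T v=2$.

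With this the generic relevant vertices are immediate: if $\deg_T u=3$, or $\deg_T u=2$ with no degree-$2$ neighbour, then $u_i$ lies on no $C_4$ of $Q$, hence cannot sit in any leaf-fiber $f^{-1}(\ell)$, so $f(u_i)$ is semi-simple. The whole difficulty concentrates in the remaining case, which I expect to be the main obstacle: an edge $uv$ with $\deg_T u=\deg_T v=2$, where $u_i$ does lie on a type-(ii) square, so a priori $f^{-1}(\ell)$ could equal such a square — a \emph{folding} of $T$ across $uv$. I must exclude this.

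To rule out the folding I would exploit the bipartition of the copies into the odd part $O=\{u_1,u_3:u\in V_T\}$ and the even part $E=\{u_2,u_4:u\in V_T\}$: the only edges of $Q$ joining $O$ to $E$ are the leaf-$C_4$ edges, and these are incident in $O$ only to odd leaf-copies. A type-(ii) leaf-fiber lies entirely inside $O$ (or inside $E$). Starting from $f^{-1}(\ell_0)=\{u_1,u_3,v_1,v_3\}\subseteq O$, the four outer edges (images of the normal edge of $\ell_0$) are index-preserving tree edges, so the next fiber is again a set of odd copies of two vertices, and one propagates outward along $T$; the regularity constraint (a fiber over a degree-$2$, resp.\ leaf, vertex must be $1$-, resp.\ $2$-regular) forces the two traced branches to have matching degree sequences and eventually to reach leaves of $T$. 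Tracing the forced perfect matchings on the edges at a degree-$3$ image whose fiber consists of odd copies, I expect to be driven to a fiber $f^{-1}(p)$, with $p$ a leaf or a degree-$2$ vertex, that is forced to contain an odd leaf-copy; but odd leaf-copies are pairwise non-adjacent in $Q$, and placing one in a $C_4$-fiber drags in all four copies of that leaf, overflowing the fiber of size $4$, while it is simultaneously required to meet a second leaf's copies. Either way $f^{-1}(p)$ fails to be $1$- or $2$-regular, a contradiction. Hence every leaf-fiber is a leaf $C_4$, no non-leaf copy maps to a leaf, and every relevant $u_i$ has semi-simple image.
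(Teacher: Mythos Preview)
Your treatment of the first assertion (the explicit $4$-fold cover via $u_i\mapsto u$) is correct and is exactly what the paper does.

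For the second assertion you are considerably more careful than the paper. The paper's argument is very short: it recalls Observation~\ref{obs:C4} (the vertices of any $4$-cycle map to at most two adjacent vertices each carrying a semi-edge), then asserts ``therefore every relevant vertex of $Q$ must be mapped onto a relevant vertex of $T\na{3}$'' and finishes with the pigeonhole count ($4r$ relevant copies in $Q$, fibers of size $4$, hence relevant maps onto relevant and irrelevant onto irrelevant). The paper never isolates your type-(ii) squares $\{u_1,v_1,u_3,v_3\}$ for an edge $uv$ with $\deg_Tu=\deg_Tv=2$; you are right that these are $4$-cycles through \emph{relevant} vertices, so the bare $C_4$ observation does not by itself justify the ``therefore''. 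In this sense your plan is more scrupulous than the paper's own proof, which tacitly treats only the situation where relevant copies lie on no $C_4$ (i.e.\ when $T$ has no two adjacent degree-$2$ vertices).

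That said, your handling of the type-(ii) case is not yet a proof. The propagation-in-$O$ idea is correct and can be made to work, but your write-up relies on phrases like ``I expect to be driven to'' and ``either way $f^{-1}(p)$ fails'', and the branching at a degree-$3$ image is not analysed. A clean way to finish is to prove, by induction on $d(p,\ell_0)$, the strengthened invariant that $f^{-1}(p)\subseteq O$ \emph{and} $f^{-1}(p)$ contains no odd leaf-copy. The first clause follows because non-leaf odd copies have all their $Q$-neighbours in $O$. For the second clause, if an odd leaf-copy $w_j$ entered $f^{-1}(p)$, then its unique $O$-neighbour $(\mathrm{parent}(w))_j$ is already committed to the backward fiber $f^{-1}(p'')\subseteq O$; when $\deg_Tp\le 2$ this leaves no $O$-neighbour to serve the internal matching, while when $\deg_Tp=3$ the two $E$-neighbours $w_2,w_4$ are pushed into the forward fibers, where they acquire two preimage-neighbours in $f^{-1}(p)$ over the single edge $pp'$, violating the matching condition. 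With the invariant established for all $p$, every fiber lies in $O$, contradicting $|O|=2|V_T|<4|V_T|=|V_Q|$. Your sketch contains each of these ingredients (the $O/E$ split, the fact that odd leaf-copies are pendant in $Q[O]$, the overflow idea) but does not assemble them into an argument; the degree-$3$ step in particular deserves an explicit sentence rather than ``I expect''.
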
  

\begin{proof}
For the first statement, map every vertex $u_i, i\in[4]$ onto $u$, for every $u\in V_T$. The edges in $T$ are simple, and hence the image of every edge  $u_iv_i$ for $u\neq v$ is uniquely determined to be the edge $uv$. 
Then it suffices to show how the edges $u_iu_j$ are mapped. If $\deg_T u=2$, then in $T\na{3}$ the vertex $u$ is incident with a single semi-edge, hence the mapping of the two edges $u_1u_3$ and $u_2u_4$ is also uniquely determined. Analogously, when $u$ is a leaf in $T$, map the opposite edges on the $C_4$ on the four copies of $u$ to the same semi-edge incident with $u$. Then use the other semi-edge as the image of the other pair of opposite edges of such $C_4$.

We now focus on the second statement. Since $T\na{3}$ is connected, the preimages of the vertices of $T$ are of the same size $\frac{|V_Q|}{|V_{T\na{3}}|}=4$, and hence $f$ is a $4$-fold covering projection. Recall Observation~\ref{obs:C4} that the vertices of a 4-cycle in $Q$ can be mapped either all on the same vertex of $T\na{3}$ (in which case this vertex is a leaf with 2 semi-edges), or to 2 adjacent vertices such that each of them has at least one semi-edge pending on it. 

Therefore, every relevant vertex of $Q$ must be mapped onto a relevant vertex of $T\na{3}$. If $r$ is the number of relevant vertices of $T\na{3}$, the number of relevant vertices of $Q$ is $4r$, and hence $f$ must be mapping (ir)relevant vertices to the (ir)relevant ones. As noted above, relevant vertices are semi-simple.      
\end{proof}

\begin{figure}
\centering
\includegraphics[scale=1,page=3]{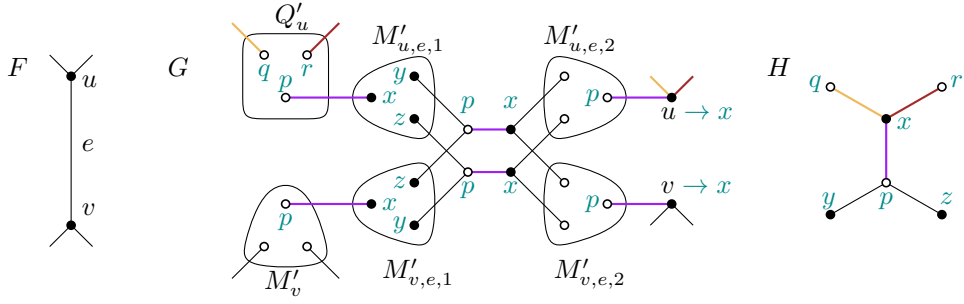}
\caption{Vertex and edge gadgets for the NP-hardness reduction from {\sc 3-Edge-Coloring}. Note that the graph $Q_w$ is in the resulting graph used only once. The cyan labels indicate the covering projection to $H$.}
\label{fig:equal3:red}
\end{figure} 

\begin{theorem}\label{thm:equal3}
For every simple tree $T$ of maximum degree $3$, the problem {\sc $T\na{3}$-Cover} is NP-complete, even for simple input graphs.
\end{theorem}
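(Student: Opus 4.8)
The plan is to reduce from {\sc 3-Edge-Coloring} of cubic graphs, which is NP-complete, following the list-covering reduction of \cite{n:BFJKR24} but removing the lists with the help of the graph $Q{:}w$. A finite tree is never $3$-regular, so $T\na{3}$ carries a semi-edge and is therefore non-bipartite; its bipartite double cover $H=T\na{3}\times K_2$ is a connected cubic multigraph with no loops and no semi-edges, and since bipartite cubic graphs are $3$-edge-colorable, I fix such a $3$-edge-coloring of $H$. By Corollary~\ref{cor:K2cover} it suffices to prove that {\sc $H$-Cover} is NP-complete for simple inputs, and because $H$ is connected, cubic and $3$-edge-colorable with no loops or semi-edges, the multicover machinery of Propositions~\ref{prop:multicoverexistence}, \ref{prop:multicoverproperties} and \ref{prop:H_edge_colorable_properties} applies to $H$.

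Given a cubic graph $F$, I would build a simple graph $G$ as follows. For each vertex $v\in V_F$ I take a copy of the multicover $M{:}w$ of $H$ whose three pendant vertices are identified with the three edges of $F$ incident with $v$; by Proposition~\ref{prop:H_edge_colorable_properties} any partial covering of this gadget maps the three pendant vertices to a single vertex of $H$ and the three pendant edges to three distinct edges there, hence to three distinct colors. For each edge $e=uv\in E_F$ I glue the pendant of $u$'s gadget corresponding to $e$ to that of $v$'s gadget through an edge gadget (again assembled from copies of $M$) forcing the two pendant edges to receive the same color; together the two families of gadgets encode exactly the constraints of a proper $3$-edge-coloring of $F$. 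The only departure from \cite{n:BFJKR24} is that a single vertex gadget is replaced by one copy of $Q{:}w$ with $w$ relevant, serving as an anchor: via the covering projection $H\to T\na{3}$, under which lifts of relevant vertices of $T\na{3}$ are semi-simple in $H$, Lemma~\ref{lem:relevant} forces the connector of this anchor to map to a semi-simple vertex of $H$. This is precisely what the single-element lists supplied in \cite{n:BFJKR24}, except that we do not control which semi-simple vertex it is.

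For correctness, in the forward direction a proper $3$-edge-coloring of $F$ is turned into a covering $G\to H$ by using Proposition~\ref{prop:multicoverproperties} to realize in each vertex gadget the prescribed bijection between the three pendant edges and the three colors at a fixed semi-simple vertex, completing each $M{:}w$ and each edge gadget to a genuine covering by Proposition~\ref{prop:multicoverexistence}, and covering the anchor $Q{:}w$ through the covering $Q\to H$. In the reverse direction a covering $G\to H$ restricts to a partial covering on every gadget: Proposition~\ref{prop:H_edge_colorable_properties} forces the three edges of $F$ at each vertex to carry distinct colors, the edge gadgets force these colors to agree along each edge of $F$, and reading off the colors yields a proper $3$-edge-coloring. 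Finally $M$ is simple and bipartite and $Q$ is bipartite, so $G$ is simple, giving NP-completeness of {\sc $H$-Cover} for simple inputs and hence, by Corollary~\ref{cor:K2cover}, of {\sc $T\na{3}$-Cover}.

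I expect the main obstacle to be the soundness of the list-free anchoring. With lists every connector was pinned to one fixed semi-simple vertex, whereas here the anchor only guarantees that a single connector lands on some, a priori unknown, semi-simple vertex. The delicate point is therefore to argue that this one constraint, propagated through the edge gadgets of a genuine covering, forces every connector to reach the semi-simple part of $H$, so that the edge incident with each pendant is a unique simple edge and the color transmitted along each edge gadget is well defined. This is what excludes spurious coverings in which a connector would map onto a non-semi-simple vertex of $H$, such as one carrying a multiple edge arising from a leaf of $T$, and thereby evade the coloring constraints. Verifying that relaxing a fixed semi-simple target to an arbitrary one still rules out all such false positives is the crux of the argument.
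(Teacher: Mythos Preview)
Your proposal is essentially the paper's proof: reduce from {\sc 3-Edge-Coloring}, pass to $H=T\na{3}\times K_2$ via Corollary~\ref{cor:K2cover}, build $G$ from multicover copies $M{:}w$ as vertex and edge gadgets, and replace a single vertex gadget by $Q{:}w$ (with $w$ relevant) to anchor the reduction without lists. The paper makes the edge gadget explicit---four copies of $M{:}w$ per edge of $F$, together with a separate copy of each vertex of $F$ in $G$---and the concrete propagation argument through this gadget shows not merely that every connector lands on a semi-simple vertex but that all connectors land on the \emph{same} vertex $x$ of $H$; this stronger conclusion is what makes the extracted edge-coloring well defined and is precisely the resolution of the obstacle you flag.
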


\begin{proof}
We reduce from {\sc 3-Edge-Coloring} and utilize Corollary~\ref{cor:K2cover} that for given cubic $F$ it suffices to construct a bipartite $G$ that covers $H=T\na{3}\times K_2$ if and only if $F$ is 3-edge colorable. See Fig.~\ref{fig:equal3:q} (b) for an example of $T\na{3}\times K_2$.

We first let $V_G=V_F$ to have for each vertex of $F$ its copy in $G$.

Then we choose an arbitrary vertex $u\in V_F$ and insert into $F$ a copy of the graph $Q{:}w$, denoted by $Q_u'$.
For every other vertex $v\in V_F$, we insert a copy $M_v'$ of the graph $M'=M{:}w$ obtained by splitting the vertex $w$ in a multicover $M$ of $T\na{3}$ as described in Propositions~\ref{prop:multicoverexistence} and \ref{prop:multicoverproperties}. 

For every edge $e\in E_F$, we insert into $G$ four copies of $M'$ and connect them to the vertex gadget and vertex copies as depicted in Fig.~\ref{fig:equal3:red}. This completes the construction of $G$. Graphs $Q$ and $M$ are bipartite, and thus $G$ is bipartite as well as indicated by the vertex colors in the figure.

Assume that $G$ covers $T\na{3}$. Since $G$ is bipartite, then by Proposition~\ref{prop:K2cover} it covers $H=T\na{3}\times K_2$. By Lemma~\ref{lem:relevant} the three edges stemming from $Q_u'$ are mapped onto three edges stemming from a relevant vertex $x\in V_H$, let us denote its neighbors $p,q,r$ which will represent three edge colors --- in Fig.~\ref{fig:equal3:red} we use purple, quince and ruby. 
Assume that the edge between $Q'_u$ and $M'_{u,e,1}$ is mapped onto 
$px$. Then the other two edges stemming from $M'_{u,e,1}$ are mapped onto $yp$ and $zp$, where $y$ and $z$ are the other two neighbors of $p$ in $H$.

Now from the perspective of $M'_{u,e,2}$, the three black vertices on edges stemming from $M'_{u,e,2}$ must be mapped onto the same neighbor of $p$. Since vertices $y,z$ are already used in $M'_{u,e,1}$, the only choice is $x$, see Fig.~\ref{fig:equal3:red}. As a consequence, the three edges between: $M'_{u,e,2}$ and $u$; $M'_v$ and $M'_{v,e,1}$; $M'_{v,e,2}$ and $v$ are all mapped onto $px$ as well.

In particular, both vertices $u$ and $v$ are mapped on $x$, and if we repeat the above argument for each edge gadget, we realize that the whole copy of $V_F$ in $G$ is mapped onto $x$. Moreover as the covering is locally bijective, the edges used as images on edges stemming from these vertices represent the desired edge colors. By our discussion, it follows that such a 3-edge coloring is well defined.

For the opposite direction, the edge coloring of $F$ can be directly transformed to a covering projection $G\to H$, 
using the relevant vertex $w\in V_T$ as the image of the copy of $V_F$, then extended on the elements of edge gadgets as described above, and finally extended inside the copy of $Q{:}w$ by the construction of $Q$ and inside each copy of $M'$ according to Proposition~\ref{prop:multicoverproperties}.

This completes the reduction of {\sc 3-Edge-Coloring} to {\sc $T\na{3}$-Cover}. Note that the size of the resulting graph $G$ is linear in the size of $F$ as we assume that $T$ is a constant parameter for the {\sc $T\na{3}$-Cover} problem.
\end{proof}

As we have noted above, the presence of $Q{:}w$ ensures that a relevant vertex with three distinct neighbors is enforced as the target vertex of all copies of the original vertices of $F$. In other words, instead of lists used in~\cite{n:BFJKR24}, we utilize the properties granted by Lemma~\ref{lem:relevant}.

\section{At least 4-regular trees}\label{sec:4-regular}

The polynomial reduction used in the proof of Theorem~\ref{thm:equal3} can be extended to $T\na{d}$ also for $d=\Delta\ge 4$, by reducing from {\sc$d$-Edge-Coloring} of $(d-1)$-uniform hypergraphs. On the other hand, for $d>\Delta$, there are no relevant vertices 
in $T\na{d}$ and hence a new reduction needs to be introduced. 

\begin{theorem}\label{thm:atleast4}
Let $T$ be a simple tree with at least 3 vertices and let $T\na{(k+1)}$ be a $(k+1)$-regular tree obtained from $T$ by adding semi-edges to its vertices (vertex $u$ gets $k+1-\deg_Tu$ semi-edges). If $k\ge 3$, then {\sc $T\na{(k+1)}$-Cover} is NP-complete even for simple input graphs. 
\end{theorem}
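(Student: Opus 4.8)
The plan is to mimic the reduction behind Theorem~\ref{thm:equal3}, but to first pass to a bipartite double cover so that the multicover machinery becomes available. Since $T$ has at least three vertices and $d=k+1\ge 4$, every leaf of $T$ acquires $d-1\ge 3$ semi-edges in $T\na{d}$; hence $T\na{d}$ contains a semi-edge and is not bipartite. Put $H:=T\na{d}\times K_2$. Then $H$ is connected (the double cover of a connected non-bipartite graph), bipartite, $d$-regular, and free of loops and semi-edges, each semi-edge at $u$ turning into the edge $u'u''$ (with multiplicity equal to the number of semi-edges at $u$). As a bipartite $d$-regular (multi)graph, $H$ is $d$-edge-colorable by König's edge-coloring theorem, so Propositions~\ref{prop:H_edge_colorable_properties}, \ref{prop:multicoverexistence} and \ref{prop:multicoverproperties} all apply to $H$ with $k=d$. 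By Corollary~\ref{cor:K2cover} it thus suffices to prove that {\sc $H$-Cover} is NP-complete for simple inputs, i.e.\ to produce, from an instance of a suitable coloring problem, a simple bipartite $G$ which covers $T\na{d}$ (equivalently, by Proposition~\ref{prop:K2cover}, covers $H$) iff the instance is positive.

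\textbf{Case $d=\Delta$.} A vertex $w$ of $T$ of degree $d$ carries no semi-edge, so its copy can serve as a target with $d$ distinct colors. I would generalize Lemma~\ref{lem:relevant}, building from several copies of $T$ an auxiliary graph $Q$ in which the copies of each small-degree vertex are linked (by $C_4$'s or suitable regular bipartite graphs) so that $Q$ covers $T\na{d}$ and every relevant boundary vertex is forced, via Observation~\ref{obs:C4}, onto a semi-simple target. I would then reduce from the $d$-edge-coloring problem for $(d-1)$-uniform hypergraphs (which for $d=3$ is precisely the $3$-edge-coloring of cubic graphs used in Theorem~\ref{thm:equal3}): each hypergraph vertex is replaced by a copy of $Q{:}w$ or of a multicover $M{:}w$, and each hyperedge by an edge gadget assembled from copies of $M'=M{:}w$, exactly as in Theorem~\ref{thm:equal3}. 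Propositions~\ref{prop:multicoverexistence} and \ref{prop:multicoverproperties} ensure that the prescribed partial projections extend, and the proper-coloring condition at a vertex becomes local bijectivity at its common image.

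\textbf{Case $d>\Delta$.} Now every vertex of $T$ has at least one semi-edge, and if $\Delta\le d-2$ there is no semi-simple vertex at all, so the $C_4$ enforcement of Observation~\ref{obs:C4} and Lemma~\ref{lem:relevant} no longer pins a target and a new anchor is needed. I would exploit that in a simple cover $G\to T\na{d}$ each semi-edge at $u$ lifts to a perfect matching on the fiber over $u$, so that the $s_u=d-\deg_T u$ semi-edges at $u$ impose an $s_u$-regular, properly $s_u$-edge-colored subgraph on that fiber, while the cut bound of Observation~\ref{obs:partial-path-lift} keeps the fibers over the rigid tree skeleton aligned. Taking a leaf $\ell$ (with $s_\ell=d-1\ge 3$) as the active vertex, I would reduce from $(d-1)$-edge-coloring of $(d-1)$-regular graphs (NP-complete for $d-1\ge 3$), placing the input graph $R$ on the fiber over $\ell$ and filling every other fiber with a fixed class-$1$ regular graph, so that---once the images are controlled---a covering exists iff $R$ is $(d-1)$-edge-colorable.

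The step I expect to be hardest is exactly this control of images for $d>\Delta$. Because $R$ is built from genuine edges of the simple graph $G$, a covering projection is a priori free to map some of them onto tree-edges of $T\na{d}$ rather than onto semi-edges, which would destroy the encoding; for $d=\Delta$ this is ruled out by Observation~\ref{obs:C4} together with the relevant-vertex argument, but for $d>\Delta$ no such anchor exists. The technical heart of the proof is therefore to design additional girth- and degree-controlled gadgetry---playing the role that $Q$ and the relevant vertices play in Theorem~\ref{thm:equal3}---that forces the color-carrying edges onto semi-edges and pins all copies of the vertices of $R$ to one common target, without recourse to lists and while keeping $G$ simple and bipartite so that Corollary~\ref{cor:K2cover} applies.
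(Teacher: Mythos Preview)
Your proposal correctly isolates the crux---for $d>\Delta$ one must pin the colour-carrying edges onto semi-edges without lists or relevant vertices---but it stops precisely there: you say the technical heart is ``to design additional girth- and degree-controlled gadgetry'' and do not design it. As written this is an outline with the central step missing, not a proof.

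The paper fills exactly this gap, and in a way that makes your case split, the detour through $H=T\na{d}\times K_2$, and the multicover machinery all unnecessary. It builds a single anchor gadget $Q=Q(T,2k)$ by taking $2k$ copies of $T$, fixing a leaf $a$, and on the fiber of every vertex $w\neq a$ placing a $(d-\deg_T w)$-regular bipartite graph between the first $k$ and the last $k$ copies; in particular the fiber of every leaf $w\neq a$ is a full $K_{k,k}$. On the $a$-fiber it instead places a $k$-regular $k$-edge-colourable graph on $a_2,\dots,a_{2k-1}$ and leaves $a_1,a_{2k}$ with $k$ pendant edges each. The anchoring argument is then: in any partial covering $f\colon Q\to T\na{d}$, each $K_{k,k}$ fiber is forced (by an elementary case analysis using $k\ge 3$) to map constantly onto some leaf of $T$; since there are $\ell-1$ such fibers and $\ell$ leaves, the cut bound of Observation~\ref{obs:partial-path-lift} forces them onto \emph{distinct} leaves, and an induction from the leaves toward the one remaining leaf $r$ shows every fiber is constant, with $f(a_1)=f(a_{2k})=r$ and the pendant edges mapped to the $k$ semi-edges at $r$. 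The reduction is then directly from $k$-edge-colouring of $k$-regular graphs $F$: take two copies of $F$ and a copy $Q_u$ of $Q$ per vertex $u$, identifying $a_1,a_{2k}$ with the two copies of $u$ and the pendant edges with the edges of $F$ incident there. This handles $d=\Delta$ and $d>\Delta$ uniformly, produces a simple (not necessarily bipartite) $G$, and never invokes Propositions~\ref{prop:multicoverexistence}--\ref{prop:multicoverproperties} at all.
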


\begin{proof}
We reduce from {\sc $k$-Edge-Coloring} of $k$-regular graphs. For every fixed $k\ge 3$, it is an NP-complete problem~\cite{Leven-Galil1983}. 

A crucial gadget in the construction is the graph $Q=Q(T,2k)$, obtained as follows: Fix a leaf $a$ of $T$. Take $2k$ copies of $T$ and denote by $w_i$ the copy of a vertex $w$ in the $i$-th copy, for every $w\in V_T$. For every vertex $w\neq a$, add edges of a $(k+1-\deg_T w)$-regular bipartite graph with classes of bipartition $\{w_1,\dots,w_k\}$ and $\{w_{k+1},\dots,w_{2k}\}$. For the special vertex $a$, add edges of a $k$-regular $k$-edge-colorable graph on vertices $\{a_2,\ldots,a_{2k-1}\}$. Such $k$-regular $k$-edge-colorable graph exists, since $K_{2k-2}$ is Vizing class 1, i.e., its vertex set can be partitioned into $2k-3$ perfect matchings. We then take any $k$ of them, which is always possible since $2k-3\ge k$ for $k\ge 3$. 

\begin{figure}
\centering
\includegraphics[scale=1,page=1]{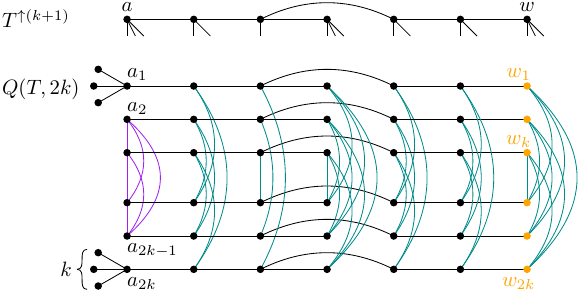}
\caption{Example of the construction of the graph $Q=Q(T,2k)$ from Theorem~\ref{thm:atleast4} for $k=3$. Horizontal layers correspond to copies of $T\na{(k+1)}$, cyan subgraphs are $(k+1-\deg_T w)$-regular bipartite and the purple is $k$-regular $k$-edge-colorable. Orange vertices form the fiber of a leaf $w$.}
\label{fig:atleast4}
\end{figure} 

Finally, add $k$ pendant edges to $a_1$, and $k$ pendant edges to $a_{2k}$, see Fig.~\ref{fig:atleast4}. Thus $Q$ has $2k|V_T|$ vertices of degree $k$ (we refer to them as the {\em core vertices} of $Q$) and $2k$ auxiliary vertices of degree 1. Recall that for every vertex $w\in V_T$, the subgraph of $Q$ induced by $\{w_1,w_2,\dots,w_{2k}\}$ is called the fiber of $w$ in $Q$. 

Suppose $f:Q\to T\na{(k+1)}$ is a partial covering projection. We make several observations.

\medskip\noindent
{\em Claim 1.} For every leaf $w\neq a$ of $T$, $f$ is constant on the $w$-fiber of $Q$, and the image of this fiber is a leaf of $T$.     

\medskip\noindent
{\em Proof of Claim 1.} The $w$-fiber is isomorphic to the complete bipartite graph $K_{k,k}$. Suppose an edge of the fiber is mapped onto an ordinary edge of $T$, say $f(w_1)=x$ and $f(w_{k+1})=y$, with $xy\in E_T$. Then none of $w_2,\ldots,w_k$ maps onto $x$, because $w_{k+1}$ has already one neighbor mapped onto $x$. Similarly, no $w_{k+2},\ldots,w_{2k}$ maps onto $y$. If any vertex of $\{w_2,\ldots,w_k\}$ were mapped onto another neighbor $x'$ of $y$ in $T$, $y$ would be the only common neighbor of $x$ and $x'$ in $T\na{(k+1)}$, and thus all  $\{w_{k+2},\ldots,w_{2k}\}$ have to be mapped onto $y$, which we already showed impossible. Hence for each $i\in[2,k]$ the edge $w_{k+1}w_i$ is mapped onto a semi-edge incident with $y$, and also $f(w_i)=y$. Similarly, we get $f(w_i)=x$ for $i\in[k+2,2k]$. But then $w_1$ (mapped to $x$) has $k-1\ge 2$ neighbors mapped onto $y$, a contradiction. Therefore, all edges $w_iw_j$ with $i\in[k], j\in[k+1,2k]$ are mapped onto semi-edges, in particular onto semi-edges incident with the same vertex $z$ of $T\na{(k+1)}$. Since $K_{k,k}$ is $k$-regular, this $z$ must be incident to (at least) $k$ semi-edges in $T\na{(k+1)}$, and hence be a leaf of $T$.     

Let us denote by $n_x$ the number of core vertices of $Q$ that are mapped to $x$ by $f$, for $x\in V_{T\na{(k+1)}}$. Since $\sum_{x\in V_{T\na{(k+1)}}}n_x = 2k|V_T|$, the average of $n_x$ is $2k$. By Observation~\ref{obs:partial-path-lift} we get: 

\medskip\noindent
{\em Claim 2.} For any $x,y\in V_{T\na{(k+1)}}$, $n_x$ and $n_y$ differ by at most 2. Moreover, $n_x\in\{2k-1,2k-,2k+1\}$ for every $x\in V_{T\na{(k+1)}}$. 

\medskip\noindent
{\em Claim 3.} The mapping $f$ is constant on every fiber. Moreover, the pendant edges of $Q$ are mapped onto semi-edges, and $f(a_1)=f(a_{2k})$ is a leaf of $T$.

\medskip\noindent
{\em Proof of Claim 3.} Suppose $T$ has $\ell$ leaves, $\ell\ge 2$. (Note also that since we assume that $T$ has at least 3 vertices, no two leaves of $T$ are adjacent.) In $Q$, there are $\ell-1$ leaf-fibers that induce complete bipartite graphs $K_{k,k}$. Each of them is mapped onto a leaf of $T$, and they must be mapped onto different ones, since otherwise some leaf $z$ would have $n_z\ge 4k$, contradicting Claim~2. So $\ell-1$ leaves are images of these fibers. Let $r$ be the last leaf of $T$. Root $T$ in $r$ and process the vertices of $T$ from the vertices farthest to the root. By induction we show that $f$ is constant on fibers and preserves the structure of $T$.

The base step of this induction are leaves of $T$, and for them the claim follows from Claim~1. 

In the inductive step, let us consider a vertex $x\in V_T$ with children $x^1,\dots,x^s$ with respective subtrees $T^1,\ldots,T^s$. By induction hypothesis, there are vertices $y^1,\dots,y^s\in V_T$ such that $f(y^i_j)=x^i$ for all $i\in[s]$ and $j\in[2k]$. The semi-edges incident with $x^i$'s are exactly covered by edges from the $y^i$-fiber, and the edges of $T^i$ are covered from the previously processed fibers. Hence each $y^i_j$ must have a neighbor that is mapped onto $x$ by $f$; this neighbor must come from the $j$-th copy of $T$, along a unique edge that has not been used so far. Thus each $y^i$ requests a fiber that is mapped onto $x$. If $s>1$, these fibers must coincide, since otherwise $n_x\ge 4k>2k+1$. Thus there is a unique vertex $y\in V_T$ such that $f(y_j)=x$ for all $j\in[2k]$, and $yy^i\in E_T$ for all $i\in[s]$. Since the edges within the $y$-fiber must cover the semi-edges incident with $x$, $\deg_T y=\deg_T x$.

When we reach the root $r$, the only fiber left available is the $a$-fiber. Hence $f(a_j)=r$ for all $j\in[2k]$. Also the ordinary edge incident with $r$ in $T$ is covered by the edges of the copies of $T$ (this comes from the induction), and hence the pendant edges incident with $a_1$ and $a_{2k}$ must map onto the semi-edges incident with $r$ in $T$.

\medskip
Now we are ready to complete the reduction. Given a connected $k$-regular simple graph $F$, take two copies $F_1, F_2$ of $F$ (with $u_1,u_2$ being the copies of a vertex $u\in V_F$ in $F_1$ and $F_2$, respectively) and $|V_F|$ copies $Q_{u}, u\in V_F$ of $Q$. Unify $a_1$ of  $Q_{u}$ with $u_1$, and $a_{2k}$ with $u_2$. Unify the pendant edges incident with $a_1$ in $Q_{u}$ with the edges of $F_1$ incident with $u_1$, and analogously for $a_{2k}$ and $F_2$. We claim that the graph $G$ constructed this way covers $T\na{(k+1)}$ if and only is $\chi'(F)=k$.

Suppose $f:G\to T\na{(k+1)}$ is a covering projection. Consider $u\in V_F$. Claim~3 states that there is a leaf $r$ of $T$ such that $f(u_1)=r$ and all edges of $F_1$ incident with $u_1$ are mapped onto semi-edges incident with $r$ in $T\na{(k+1)}$. If $v$ is a neighbor of $u$ in $F$, it follows that $f(v_1)=r$ as well. Moreover, since $F_1$ is connected, $f(V_{F_1})=\{r\}$ and all edges of $F_1$ are mapped onto the semi-edges incident with $r$. This mapping of edges of $F_1$ yields a proper $k$-edge-coloring of $F_1$. Hence $\chi'(F)=\chi'(F_1)=k$.

For the opposite implication, fix a proper $k$-edge-coloring of $F$, and use the names of the semi-edges incident with $a$ in $T\na{(k+1)}$ as the colors. Use the same edge-coloring on $F_1$ and $F_2$. Extend these partial covering projections to a covering projection $f:G\to T\na{(k+1)}$ in each $Q_{u}$ by setting $f(u_j)=u$ for every $j\in[2k]$ and $u\in V_T$, defining the mapping onto normal edges of $T\na{(k+1)}$ along the copies of $T$, and defining the mappings onto semi-edges as appropriate edge-colorings in the fibers (using the fact that bipartite graphs are of Vizing class 1, and the subgraph on the $a$-fiber was defined to be $k$-edge-colorable).     
\end{proof}

\section{Concluding remarks}

In this paper, we demonstrated that any regular tree with vertices of degree $d \ge 3$ yields an NP-complete graph covering problem even for simple input graphs. Since the 1997/2000 result on NP-hardness of covering simple regular graphs~\cite{n:KPT97,n:Fiala00b}, this is the first general NP-hardness theorem on graph covers encompassing a large class of regular target non-simple graphs with more than two vertices. For allowing non-simple target graphs in this theorem, we are paying by requesting them to be acyclic. 

As we prove the NP-hardness for simple input graphs, the result provides further confirmation of the Strong Dichotomy Conjecture~\cite{n:BFJKR24}, this time for the case of regular trees as target graphs. 

Besides the conjecture, whose proof remains the primary goal, we propose another open question that arises from our results: Suppose that $T$ is a tree of maximum degree $\Delta $. For $d\ge \Delta$, denote by $\overline{T\na{d}}$ a $d$-regular graph obtained from $T$ by adding semi-edges or loops so that every vertex of $\overline{T\na{d}}$ has degree $d$ (note that $\overline{T\na{d}}$ is not defined uniquely). 

\begin{conjecture}
If $\Delta\ge 3$ and $\overline{T\na{d}}$ is constructed as above, then the {\sc $\overline{T\na{d}}$-Cover} problem is NP-complete even for simple input graphs.
\end{conjecture}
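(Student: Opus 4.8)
The plan is to eliminate the loops entirely by passing to the bipartite double cover, and thereby reduce the conjecture to a statement about $T\na{d}$ alone. The crucial observation is that \emph{$\overline{T\na{d}}\times K_2$ and $T\na{d}\times K_2$ are one and the same graph}: in the product with $K_2$, every semi-edge at a vertex $u$ becomes a single edge $u'u''$ and every loop at $u$ becomes a double edge $u'u''$, so in either construction the multiplicity of $u'u''$ equals the total added degree $d-\deg_T u$, while the images of the tree-edges are identical. Since $\Delta\ge 3$ forces every leaf of $T$ to receive extra links, $\overline{T\na{d}}$ is non-bipartite, so Corollary~\ref{cor:K2cover} turns the conjecture into the loop-free assertion that $(T\na{d}\times K_2)$-{\sc Cover} is NP-complete for simple inputs; by Proposition~\ref{prop:K2cover} this is in turn equivalent to $T\na{d}$-{\sc Cover} being NP-complete for \emph{bipartite} simple inputs.

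This reformulation also explains why one cannot simply rerun the reduction of Theorem~\ref{thm:atleast4} against $\overline{T\na{d}}$ directly. If the root leaf $r$ carries $\ell_r$ loops and $s_r$ semi-edges, then the condition ``$F$ is $k$-edge-colorable'' relaxes to ``$E(F)$ decomposes into $s_r$ perfect matchings and $\ell_r$ $2$-factors'', which may be polynomial-time decidable (for instance, when $k$ is even and $r$ carries only loops, every $k$-regular $F$ qualifies by Petersen's $2$-factor theorem). Routing through $\overline{T\na{d}}\times K_2$ circumvents this collapse precisely because over a bipartite input every $2$-factor is a union of even cycles and hence splits into two perfect matchings, so the two constraints coincide.

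For $d=3$ the reformulated statement is already in hand: under the conjecture's hypothesis $\Delta\ge3$ the case $d=3$ means $\Delta=3$, and the reduction proving Theorem~\ref{thm:equal3} constructs simple \emph{bipartite} inputs for $(T\na{3}\times K_2)$-{\sc Cover}, so the conjecture holds verbatim for cubic regular trees (and even the excluded path case $\Delta=2$ is covered by Theorem~\ref{thm:paths}). The same holds for $d=\Delta\ge 4$: the extension of the Theorem~\ref{thm:equal3} reduction sketched in the concluding remarks reduces from $d$-edge-coloring of $(d-1)$-uniform hypergraphs and encodes colors through the distinct neighbors of a relevant degree-$d$ vertex using bipartite multicover gadgets, hence likewise produces bipartite inputs.

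The main obstacle is therefore the range $d>\Delta$ with $\Delta\ge 3$. Here $T\na{d}$ has no relevant vertex: every vertex, including those of maximum degree, is incident with a semi-edge, so the rigidity that the gadget $Q$ supplied through Lemma~\ref{lem:relevant}---forcing the copies of $V_F$ onto a single semi-simple target whose $d$ distinct neighbors serve as colors---is no longer available, and the only existing reduction, Theorem~\ref{thm:atleast4}, is inherently non-bipartite because $k$-edge-coloring is easy on bipartite graphs by K\"onig's theorem. The plan is to replace $Q$ by a \emph{bipartite} enforcement gadget whose covering projections to $T\na{d}$ are rigid enough to pin the target of the $V_F$-copies to one fixed vertex while leaving its added links free to encode colors, fed from a hardness source that survives the restriction to bipartite inputs. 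Building such a bipartite anchor in the absence of relevant vertices---so that the $C_4$/matching argument behind Observation~\ref{obs:C4} and Lemma~\ref{lem:relevant} is replaced by something that still fixes a unique image---is where I expect the real work to lie.
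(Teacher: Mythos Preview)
This statement is posed in the paper as an open \emph{conjecture}; there is no proof in the paper to compare against. Your reduction idea is sound: the identity $\overline{T\na{d}}\times K_2 = T\na{d}\times K_2$ is correct (a loop and two semi-edges at $u$ both contribute two parallel $u'u''$-edges in the product), and together with Corollary~\ref{cor:K2cover} and Proposition~\ref{prop:K2cover} it cleanly reformulates the conjecture as ``{\sc $T\na{d}$-Cover} is NP-complete for simple \emph{bipartite} inputs''. For $d=\Delta=3$ this is indeed delivered---the reduction of Theorem~\ref{thm:equal3} explicitly constructs bipartite instances. For $d=\Delta\ge4$ you appeal to the one-sentence extension of that reduction (mentioned at the opening of Section~\ref{sec:4-regular}, not in the concluding remarks); that extension is only sketched in the paper, so this step is plausible but not fully written out anywhere.

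The gap you yourself flag for $d>\Delta$ is, however, the entire remaining content of the conjecture, and you do not close it: you only describe what a hypothetical gadget would need to achieve. Your diagnosis of why the existing tools fail is accurate---Theorem~\ref{thm:atleast4} reduces from $k$-edge-coloring and hence trivialises on bipartite sources by K\"onig's theorem, while the $Q$-gadget route of Section~\ref{sec:deg3} requires a relevant vertex, and none exists once every vertex of $T\na{d}$ carries a semi-edge. So your proposal is a genuine simplification of the problem and a correct partial result for $d=\Delta$, but as a proof of the conjecture it is incomplete: constructing a bipartite anchoring gadget, or an altogether different hardness source that survives the restriction to bipartite inputs, in the regime $d>\Delta\ge3$ is precisely the missing piece, and it remains missing.
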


\section*{Acknowledgment}
The authors thank the anonymous reviewers for valuable comments that helped increase the readability of the article.

\bibliographystyle{plainurl}
\bibliography{bib/barveni,bib/knizky,bib/nakryti,bib/sborniky,0-main}

\end{document}